\documentclass[11pt,letterpaper]{article}

\usepackage[margin=1in]{geometry}
\usepackage[T1]{fontenc}
\usepackage{amsmath,amsthm}
\usepackage{newtxtext,newtxmath}
\usepackage{microtype}
\usepackage[hyphens]{url}
\usepackage{xurl}
\usepackage{graphicx}
\usepackage[authoryear,round]{natbib}
\usepackage{caption}
\usepackage{booktabs}
\usepackage{float}
\usepackage{hyperref}
\usepackage[nameinlink,noabbrev]{cleveref}

\newcommand{\ArxivAuthorName}{Hiroaki Odahara}
\newcommand{\ArxivAffiliation}{%
Market Design Center, Graduate School of Economics, The University of Tokyo\\
Graduate School of Informatics and Engineering, The University of Electro-Communications}
\newcommand{\ArxivDate}{July 2026}
\newcommand{\ArxivAcknowledgments}{%
This work was supported by JST ERATO Grant Number JPMJER2301, Japan.}

\hypersetup{
  colorlinks=true,
  linkcolor=black,
  citecolor=black,
  urlcolor=black,
  pdfauthor={\ArxivAuthorName},
  pdftitle={Reversing Reserve Logic: Optimal Holdback in Local Allocation under Scalable Entry}
}

\newtheorem{theorem}{Theorem}
\newtheorem{proposition}[theorem]{Proposition}
\newtheorem{lemma}[theorem]{Lemma}
\newtheorem{corollary}[theorem]{Corollary}

\newcommand{\E}{\mathbb{E}}
\newcommand{\Prb}{\mathbb{P}}
\newcommand{\one}{\mathbf{1}}
\newcommand{\PS}{\mathrm{PS}}
\newcommand{\dd}{\,\mathrm{d}}

\title{Reversing Reserve Logic:\\
Optimal Holdback in Local Allocation under Scalable Entry}
\author{\ArxivAuthorName\\[0.35em]
\small \ArxivAffiliation}
\date{\ArxivDate}

\begin{document}
\maketitle

\begin{abstract}
Scarce opportunities such as concert tickets and accelerator time may be
contested by automated participants that can create accounts and sustain
commitments beyond the reach of commitment-limited intended users. When
account counts are untrusted, we study anonymous screening rules that ignore
them, cap retained burdens, use only an account's commitment and strongest
rival, and do not reassign after rejecting the leader. Within this class, we
characterize the rule maximizing intended users' expected utility when they
commit fully and a scalable entrant stays out. The optimum refunds and
allocates at low congestion, retains and allocates at intermediate congestion,
and retains while withholding allocation from an otherwise eligible leader
when the strongest rival lies in the upper tail. Unlike a conventional
reserve, which rejects a low leading bid, this rule treats an unusually strong
rival as evidence of entrant imitation. A direct dual certificate proves class
optimality; a benchmark shows that upper-tail holdback can raise intended-user
surplus before it is necessary to support non-entry. The rule supports an
equilibrium with full commitment and entrant non-entry.
\end{abstract}

\section{Introduction}
\label{sec:intro}

Auctions may leave a scarce opportunity unallocated despite a unique highest
bid. A familiar revenue-maximizing explanation is that the highest bid fails
to meet a reserve price \citep{Myerson1981}. We reverse this screening logic
for an operator prioritizing intended-user access over current receipts. In an
open-identity environment, account counts are
unreliable, while a scalable entrant with lower direct-use value may sustain
commitments beyond the reach of commitment-limited intended users. Holding the
leading commitment fixed, an unusually strong runner-up is more likely when
the leader is the entrant than when it is an intended user. Within this class,
the optimum may retain the otherwise eligible
leader's posted commitment while withholding allocation. The leader is
rejected not because its commitment is too low, but because the runner-up is
too high.

Intended users may include young fans whose ability to post a large deposit
need not reflect attendance value, or small research teams that value AI
accelerator time despite tight liquidity or authorization limits. Automated
participants and commercial aggregators can replicate accounts and coordinate
commitments, making purchase limits fragile
\citep{Courty2019,USCongress2016BOTS}. In AI infrastructure, an account is an
autonomous client instance, a commitment is a verifiable reservation or
liquidity authorization, and the opportunity is an expiring compute or
priority slot. Ticketing and AI infrastructure are motivating applications,
not calibrations.

The operator asks each account to post a verifiable commitment before assigning
one current opportunity. Treatment may depend on the account's commitment and
its strongest observed rival, but not on account count or identity. Count
independence is an operational robustness requirement, not a consequence of
knowing the intended-user population. The announced rule is single-pass: after
all submissions, the operator resolves the opportunity once and does not
reassign it after rejecting the unique leader. For that leader, the strongest
rival is the runner-up commitment. The rule uses it as evidence about the
leader, not to select its owner as a fallback. Holdback denotes deliberate
current nonallocation, without fallback or a promise of later service.

Retention burdens intended users but preserves current allocation value while
reducing entrant payoff. The pointwise linear program therefore uses it before
holdback, producing three bands: refund and allocate at low congestion, retain
and allocate at intermediate congestion, and retain while withholding only in
the upper tail. A leader-always-allocate comparison sharpens the interpretation.
Holdback can strictly improve intended-user surplus even while a rule that
always allocates to the unique leader can still support entrant non-entry, so
optimality precedes necessity.

A continuous linear program describes the design problem. A pointwise
weak-duality certificate produces two congestion cutoffs and the exact class
optimum for any continuous commitment-capacity distribution satisfying the
maintained assumptions. Under that rule, every intended user with positive
capacity uniquely prefers to commit fully when all other intended users do so
and the entrant stays out. The entrant is indifferent between staying out and
participating, so the rule supports non-entry without uniquely selecting it.
An external entrant-specific account-creation cost makes non-entry uniquely
optimal without changing the rule.

A natural objection is that requiring full commitment may itself create the
result. Intended users who anticipate upper-tail holdback might shade their
commitments, pool at a common interior cap, or mix over lower actions. A
full-commitment reduction, together with the actionwise upper bound, shows
that none of these responses improves joint rule-and-policy design within the
maintained class. Given a feasible local rule and a common, independently
randomized, capacity-feasible one-account policy under which entrant non-entry
is a best response, the reduction constructs a full-commitment implementation
while preserving the local information structure, burden cap, unconditional
joint distribution of assigned allocation probabilities and retained burdens,
aggregate intended-user surplus, and non-entry.

\paragraph{Contributions.}
The paper makes three contributions. First, it characterizes the exact
class-optimal three-band rule and certifies it by pointwise weak duality. The
rule refunds and allocates at low congestion, retains and allocates at
intermediate congestion, and retains while withholding in the upper tail.
Second, it separates optimal holdback from necessary holdback. In a two-user
uniform benchmark, upper-tail nonallocation strictly raises intended-user
surplus even though a leader-always-allocate rule can still support entrant
non-entry. Third, a full-commitment reduction extends the same value bound to
common, independently randomized, capacity-feasible one-account policies under
which entrant non-entry is a best response.

\paragraph{Scope and interpretation.}
The exact optimality claim is conditional on anonymous, count-independent,
strongest-rival, single-pass rules. The policy comparison covers common,
independently randomized one-account behavior subject to individual capacity
limits. The optimization excludes asymmetric or publicly correlated behavior, general
multiple-account policies, direct posting costs, and type-dependent values.
The operator is modeled as prioritizing intended-user access rather than
current receipts: its objective is intended users' allocation value net of
retained burdens, with no weight on entrant value. It is not a total-welfare
or within-group-equity objective. This access objective can represent mission,
reputation, or long-run participation concerns that make service to intended
users more important than current receipts; those motives are not modeled
separately. Allowing entrant service would change both the target behavior and
the evaluated objective and lies outside the optimality theorem.

\paragraph{Relation to prior work.}
Money burning and costly-signal mechanisms screen agents through wasteful
payments or ordeals
\citep{HartlineRoughgarden2008,Condorelli2012,ChakravartyKaplan2013,
NicholsZeckhauser1982,Condorelli2013}. Here the diagnostic state is observed
congestion, and the design question is how allocation and burdens should be
ordered across that state. Queues, contests, and financially constrained
allocation provide related strategic environments
\citep{Naor1969,HoltSherman1982,Leshno2022,MoldovanuSela2001,Siegel2009,
PaiVohra2014}.

False-name manipulation in anonymous systems has been studied in auctions,
costly voting, and Bayesian Vickrey--Clarke--Groves mechanisms
\citep{YokooSakuraiMatsubara2004,WagmanConitzer2008,GafniLaviTennenholtz2020}.
\citet{PanEtAl2026Sybil} characterize mechanisms under non-wastefulness,
incentive compatibility, symmetry, and ex post Sybil-proofness. Our class
permits holdback and obtains only a designated-profile result for unilateral
identity splitting under additive retained burdens. \citet{MazorraDellaPenna2023}
instead study identity costs and commitment to future Sybil strategies.
Seller-side shill bidding is another distinct manipulation
\citep{KomoKominersRoughgarden2025}.

Ticket markets and fan-allocation platforms motivate the application
\citep{LeslieSorensen2014,Courty2019,BudishBhave2023,TangZengZuo2017}.
Within AI and multiagent systems, related work studies strategic
shared-resource allocation, Bayesian scarce-facility mechanisms,
manipulation-aware testing, and adversarial auditing
\citep{FriedmanEtAl2019,AuricchioZhang2026,QiuShan2026,DasYuZhang2026}.
These models optimize resource allocation, facility placement, test
sequencing, or verification. Our distinct combination is untrusted account
counts, bounded commitments, and scalable entry. Within it, we characterize
when a count-independent local rule uses congestion-contingent current
nonallocation.

\section{Model and Local Rule Class}
\label{sec:model}

A single indivisible access opportunity is available, with the commitment cap
normalized to one. There are $K\geq2$
intended users, each commitment-limited. Each has the same direct-use value $v_C>1$ and independently
drawn commitment capacity $\kappa_i\sim F$ on $[0,1]$. The CDF $F$ is strictly
increasing and absolutely continuous, with density $f>0$ almost everywhere.
Intended user $i$ may submit any commitment $a_i\in[0,\kappa_i]$. A scalable entrant
$D$ obtains value $v_D\in(0,v_C)$ from obtaining the opportunity for
operational or commercial use and can choose any commitment in $[0,1]$.
Throughout, subscript $C$ labels intended-user quantities and $D$ labels the
scalable entrant.

Figure~\ref{fig:three-regimes} previews the class-optimal three-band rule. Its
green region displays the central design result: the rule rejects an otherwise
eligible leader when the strongest rival commitment reaches the upper tail.
The middle panel adds that this targeted holdback can be optimal even though a
leader-always-allocate rule can still support entrant non-entry.

\begin{figure*}[t]
\centering
\includegraphics[width=\textwidth]{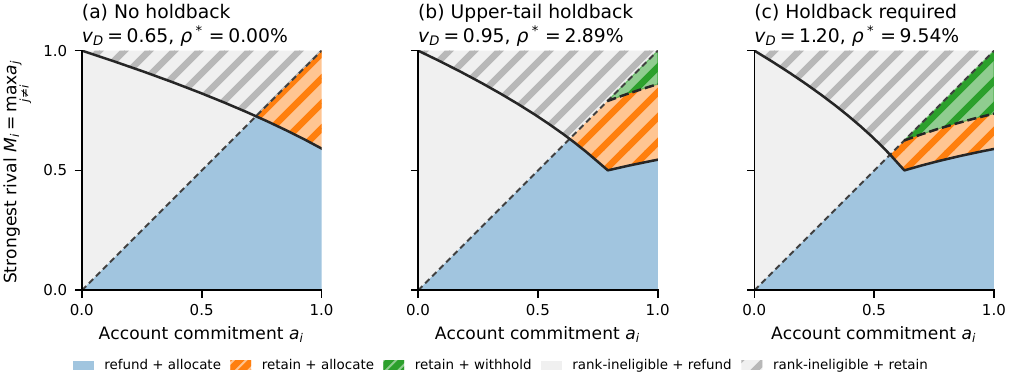}
\caption{Upper-tail holdback in the class-optimal rule. State maps for $K=2$,
$F(a)=a$, and $v_C=1.5$. At $v_D=0.65$, $0.95$, and $1.20$,
respectively, holdback is absent, surplus-improving but unnecessary for
supporting non-entry, and necessary for such support. The panel titles report
the ex ante holdback probability $\rho^*$. Colors indicate refund/allocate,
retain/allocate, and retain/withhold. Gray denotes states in which the strongest
rival weakly exceeds the account's own commitment, making the account
rank-ineligible. Diagonal bands always mean full retention: plain gray is
refunded, whereas striped gray is retained. The solid curve separates refund
from full retention; the off-diagonal dashed curve separates allocation from
withholding. Conditional on a submitted profile, the rule is deterministic
almost everywhere: green denotes targeted upper-tail holdback, not a
state-independent failure lottery.}
\label{fig:three-regimes}
\end{figure*}

The population composition and the primitives $K$, $v_C$, $v_D$, and $F$ are common
knowledge. The private information is an intended user's commitment capacity,
not her value. Account ownership and controller role are not observable or
verifiable: after seeing the submitted commitments, the operator cannot tell
whether an account is controlled by an intended user or by $D$, or link
several accounts to a common controller. The rule therefore cannot condition
on an intended-user/entrant label. Commitment capacity is conceptually
distinct from use value and caps only enforceable exposure. An intended user
may therefore value access highly while having little enforceable liquidity.
Cash liquidity is the leading interpretation, but the same notation also
covers deposits, liquidity authorizations, and similar enforceable burdens.

\paragraph{Timing.}
Each intended user privately observes her capacity. Intended users and the
entrant choose commitments simultaneously without observing others' realized
capacities or commitments; the entrant may instead abstain. The operator then
applies the precommitted rule once. Thus an entrant deviation is a fixed
action, not a policy contingent on realized congestion.

Let $N$ be the set of submitted accounts and
$\boldsymbol a=(a_j)_{j\in N}$ their commitment profile. For each $i\in N$, define
the strongest-rival statistic
$M_i(\boldsymbol a_{-i})=\max_{j\in N\setminus\{i\}}a_j$. A measurable local
mechanism applies the same functions $x$ and $r$ to every account, giving
account $i$ current-round allocation probability $x(a_i,M_i)\in[0,1]$ and retained burden
$r(a_i,M_i)\in[0,a_i]$. Only an account with a strictly highest commitment can
receive the unit, so $x(a_i,M_i)\leq\one\{M_i<a_i\}$. Because $F$ is
continuous, ties have probability zero when intended users submit their
capacities; the enlarged policy comparison below permits atoms and positive
tie probability. The mechanism is
anonymous and account-count-independent: apart from $(a_i,M_i)$, treatment
cannot depend on identity, account labels, or the number of submissions.
We call a measurable local rule pointwise feasible if
$0\leq x(a,m)\leq\one\{m<a\}$ and $0\leq r(a,m)\leq a$ for every
$(a,m)\in[0,1]^2$.

\paragraph{Why the strongest rival?}
Restricting treatment to $(a_i,M_i)$ is stronger than account-count
independence alone. Zero-commitment identities can change submission counts
without retained-burden exposure, whereas raising $M_i$ requires another
account to accept the corresponding enforceable commitment exposure. This is
a manipulation-robustness rationale, not a characterization of every robust
statistic. This is a maintained operational restriction, not a
without-loss reduction.
Corollary~\ref{cor:finite-false-name} later addresses finite identity
splitting; the technical appendix examines the additional but fragile power of exact
count conditioning. In the focal-account calculations below, $a$ denotes a
candidate own action and $m$ a realization of $M_i$. Intended users and the
entrant are risk neutral, and burdens enter utility linearly. When
$x(a,m)=0$ in a rank-eligible state,
the current opportunity is withheld and yields no allocation value; the model
contains no promise of later service or continuation value.

\paragraph{Induced Bayesian game.}
Fix a local rule $(x,r)$. The induced game is specified as follows.
\begin{enumerate}
\item Intended user $i$ has private type $\kappa_i\in[0,1]$ and
chooses an action in $[0,\kappa_i]$. The entrant chooses from
$A_D=\{\varnothing\}\cup[0,1]$, where $\varnothing$ denotes non-entry and
$a=0$ denotes entry with zero commitment.
\item Each submitted account $j\in N$ receives payoff
$v_jx(a_j,M_j(\boldsymbol a_{-j}))-r(a_j,M_j(\boldsymbol a_{-j}))$, where
$v_j=v_C$ for an intended user and $v_j=v_D$ for the entrant. Non-entry gives
the entrant zero.
\item A pure intended-user strategy is a measurable function $s_i$ satisfying
$s_i(\kappa_i)\in[0,\kappa_i]$.
\end{enumerate}

\paragraph{Designated profile.}
The designated profile has $s_i^T(\kappa_i)=\kappa_i$ for every
intended user and $s_D^T=\varnothing$.

\paragraph{Equilibrium convention.}
Equilibrium statements use the typewise convention that a strategy must be a
best response at every intended-user type, rather than only almost everywhere.
The equilibrium-preservation extension of the reduction, proved in the
technical appendix, gives only an almost-everywhere conclusion; the
joint-design bound does not require typewise equilibrium preservation.

Under this profile, an intended user faces the maximum of $K-1$ independent
capacities, whose CDF is $F^{K-1}$; a fixed entrant action faces the maximum of
all $K$ intended-user capacities, whose CDF is $F^K$. Hence an intended user
who chooses $a$ has expected payoff
\begin{equation}
 U_C(a)=\int_0^1\!\left[v_Cx(a,m)-r(a,m)\right]\dd F(m)^{K-1}.
 \label{eq:constrained-utility}
\end{equation}
The rank constraint already sets $x(a,m)=0$ when $m\geq a$. The entrant's
deviation payoff from the same action is
\begin{equation}
 U_D(a)=\int_0^1\!\left[v_Dx(a,m)-r(a,m)\right]\dd F(m)^K.
 \label{eq:entrant-utility}
\end{equation}

\begin{proposition}[Designated-profile equilibrium]
\label{prop:target-equilibrium}
For any pointwise-feasible local rule:
\begin{enumerate}
\item The designated profile is a pure-strategy Bayesian Nash equilibrium if
and only if $U_C$ is nondecreasing and $U_D(a)\leq0$ for every $a\in[0,1]$.
\item Conditional on the other players' designated strategies, full
commitment is the unique best response for an intended user of every type
$\kappa>0$ if
and only if $U_C$ is strictly increasing.
\item If $U_D(a)=0$ for every $a\in[0,1]$, then non-entry and every commitment
in $[0,1]$ are best responses for the entrant. Thus non-entry is supported but
is not uniquely selected.
\end{enumerate}
\end{proposition}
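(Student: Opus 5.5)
The plan is to reduce each part to a statement about the one-dimensional payoff functions $U_C$ and $U_D$ in \eqref{eq:constrained-utility}--\eqref{eq:entrant-utility}. The key observation is that, under the designated profile, a unilateral deviator's payoff depends only on its own action. An intended user who deviates to $a$ faces $K-1$ opponents submitting their capacities. Because $F$ is continuous, the strongest rival $M_i$ has CDF $F^{K-1}$, so her expected payoff is exactly $U_C(a)$. An entrant choosing $a\in[0,1]$ adds one account to the $K$ intended users. The entrant's rival statistic is the maximum of all $K$ capacities, with CDF $F^K$, so its payoff is $U_D(a)$; non-entry yields $0$.

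The entrant's entry does change intended users' rival statistics. This matters only when checking the entrant's own deviation, and there only the entrant's payoff is relevant. Hence the unilateral-deviation conditions decouple: intended user $i$ of type $\kappa$ chooses from $[0,\kappa]$ to maximize $U_C$, and the entrant compares $\max_a U_D(a)$ with $0$.

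For part 1, the designated strategy $s_i^T(\kappa)=\kappa$ is a best response at every type $\kappa$ (the typewise convention) iff $U_C(\kappa)\geq U_C(a)$ for all $a\leq\kappa$ and all $\kappa\in[0,1]$. This is precisely the statement that $U_C$ is nondecreasing on $[0,1]$. For sufficiency, take $a\leq\kappa$ and use monotonicity. For necessity, given $a<\kappa$, apply the best-response condition at type $\kappa$. Non-entry is a best response iff $U_D(a)\leq 0$ for every $a\in[0,1]$. Combining the two conditions gives the equivalence.

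For part 2, full commitment is the unique best response for type $\kappa>0$ iff $U_C(\kappa)>U_C(a)$ for every $a\in[0,\kappa)$. Requiring this for all $\kappa\in(0,1]$ says exactly that $a<\kappa$ implies $U_C(a)<U_C(\kappa)$, i.e.\ $U_C$ is strictly increasing on $[0,1]$. The endpoint $a=0$ is included because $\kappa>0$ allows the comparison $0<\kappa$. Part 3 is immediate: if $U_D\equiv0$, every action in $A_D$ gives the entrant $0$, so all are best responses, and non-entry is not uniquely selected.

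The only real obstacle is justifying that the deviator's payoff equals $U_C(a)$ or $U_D(a)$ for every action $a$, including $a$ where ties with opponents occur. Continuity of $F$ handles this: ties $M_i=a$ occur with probability zero, and both $x$ and $r$ are integrated against an atomless distribution. Measurability of the local rule guarantees the integrals are well defined, and the boundedness $0\leq r\leq a\leq1$ and $0\leq x\leq 1$ ensures integrability. I would state this as a short preliminary step and then run the three equivalences above.
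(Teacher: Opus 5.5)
Your proposal is correct and follows the same route as the paper: reduce each unilateral deviator's payoff to $U_C(a)$ over $[0,\kappa]$, or to $U_D(a)$ against zero for non-entry, and then read off the three claims under the typewise convention. Your separate tie-handling step is harmless but not needed, because the account's payoff depends only on $(a,M_i)$ and the maximum of $n$ i.i.d.\ capacities has CDF $F^n$ whether or not $F$ is continuous.
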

\begin{proof}
An intended user of type $\kappa$ has action set $[0,\kappa]$ and obtains
$U_C(a)$ from action $a$ against the designated opponents. This proves the two
claims for intended users.
The entrant obtains zero from non-entry and $U_D(a)$ from action $a$, which
proves the remaining claims.
\end{proof}

Because $U_C(0)=0$, nondecreasing designated-profile utility also implies Bayesian
interim individual rationality (IR). Ex post IR is not imposed:
a losing account can forfeit its full commitment.

\section{The Optimal Rule: Upper-Tail Holdback}
\label{sec:mechanism}

Let $\mathcal M$ be the set of pointwise-feasible local rules. The designer
solves
\begin{equation}
\begin{aligned}
 (\mathsf P)\quad
 \sup_{(x,r)\in\mathcal M}\quad &K\int_0^1 U_C(a)\dd F(a)\\
 \text{s.t.}\quad &U_D(a)\leq0
     &&\forall a\in[0,1],\\
 &U_C(a)\geq0
     &&\forall a\in[0,1],\\
 &U_C(a')\geq U_C(a)
     &&\forall\,0\leq a\leq a'\leq1.
\end{aligned}
\label{eq:design-lp}
\end{equation}
The objective is intended-user surplus: retained burdens are losses, not
transfers credited to the operator. Revenue maximization is a different
objective. Problem~\eqref{eq:design-lp} is linear in the two state-contingent
functions. The main proof first solves its
actionwise relaxation by weak duality and then verifies that the resulting
utility makes full commitment a strict best response at the designated
profile. The actionwise upper bound omits intended-user incentive and
participation constraints. It is an intentional relaxation rather than a
claim that the designer can prescribe nonequilibrium behavior.

\paragraph{Why the LP generates three bands.}
At a fixed commitment, an intended user faces strongest-rival density
$g_C(m)=(K-1)F(m)^{K-2}f(m)$, whereas a scalable entrant choosing the same
commitment faces $g_D(m)=KF(m)^{K-1}f(m)$. Thus, for Lebesgue-almost every
$m\in(0,1)$,
\begin{equation}
  \frac{g_D(m)}{g_C(m)}=\frac{K}{K-1}F(m).
  \label{eq:mlr}
\end{equation}
This ratio rises strictly with congestion, a monotone likelihood ratio in the
strongest rival. Upper-tail states are therefore relatively more diagnostic
of an entrant choosing that same commitment.

Fix $a>0$ and $q\in(0,1)$, and place the positive multiplier
$\lambda=(K-1)/q$ on the entrant's nonpositive-payoff constraint. Removing a
common positive density factor gives the Lagrangian coefficients
\begin{align}
 \Phi_R(a,m)&=\lambda F(m)-(K-1),\label{eq:phiR}\\
 \Phi_X(a,m)&=(K-1)v_C-\lambda v_DF(m).\label{eq:phiX}
\end{align}
The retention coefficient is positive for $F(m)>q$, whereas the allocation
coefficient is negative only for $F(m)>(v_C/v_D)q>q$. Thus, within
rank-eligible states, the order is refund and allocate, retain and allocate,
then retain and withhold. The value of $q$ is chosen below so that the entrant
constraint binds.

Define
\begin{equation}
 B_K\equiv \frac{v_C^K}{v_D^{K-1}}.
 \label{eq:BK}
\end{equation}
For every own commitment $a>0$, define the retention-cutoff quantile
\begin{equation}
 q_R(a)^K
 =\max\left\{
      1-\frac{v_D}{a}F(a)^K,
      \frac{a}{a+B_K}
    \right\},
 \label{eq:qR}
\end{equation}
and the allocation-cutoff quantile
\begin{equation}
 \bar q_R(a)=\min\left\{1,\frac{v_C}{v_D}q_R(a)\right\}.
 \label{eq:qbar}
\end{equation}
Set $x^*(0,m)=r^*(0,m)=0$. For $a>0$, the candidate rule is
\begin{align}
 r^*(a,m)&=a\one\{F(m)\geq q_R(a)\},
 \label{eq:optimal-retention}\\
 x^*(a,m)&=\one\{F(m)<\min(F(a),\bar q_R(a))\}.
 \label{eq:allocation-rule}
\end{align}
Thus holdback is not a primitive probability: it is the probability mass of
the states in which the allocation cutoff lies below the rank-eligibility
cutoff $F(a)$.

\begin{theorem}[Optimal upper-tail holdback]
\label{thm:deposit}
Under the model assumptions above, the rule
\eqref{eq:optimal-retention}--\eqref{eq:allocation-rule} solves
Problem~\eqref{eq:design-lp} and attains the actionwise upper bound obtained by
dropping intended-user monotonicity and IR. Under this rule, the designated
profile is a pure-strategy Bayesian Nash equilibrium: every
intended-user type $\kappa>0$ uniquely prefers full commitment, while entrant
non-entry is a best response but is not uniquely selected.
\end{theorem}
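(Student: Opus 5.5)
The proof splits into an actionwise relaxation, a pointwise weak-duality certificate for each commitment, and a separate verification that the resulting utility is feasible and strictly increasing.

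\textbf{Relaxation.} Drop the intended-user monotonicity and IR constraints from Problem~\eqref{eq:design-lp}. The objective $K\int U_C(a)\dd F(a)$ and the entrant constraints then separate by action. For each fixed $a>0$ the relaxed problem is
\[
\sup_{x(a,\cdot),\,r(a,\cdot)} \int_0^1 \bigl[v_Cx(a,m)-r(a,m)\bigr]g_C(m)\dd m,
\]
subject to $\int_0^1 [v_Dx(a,m)-r(a,m)]\,g_D(m)\dd m\le0$, $0\le x\le\one\{m<a\}$ and $0\le r\le a$. The plan is to show that $(x^*,r^*)$ solves each of these problems and then that it satisfies the dropped constraints.

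\textbf{Pointwise weak duality.} For a multiplier $\mu\ge0$ and any feasible $(x,r)$,
\[
U_C(a)\le \int_0^1\bigl[(v_Cg_C-\mu v_Dg_D)\,x+(\mu g_D-g_C)\,r\bigr]\dd m .
\]
Using the likelihood ratio \eqref{eq:mlr}, factor out $g_C/(K-1)$ and set $\mu=(K-1)\lambda/K$. The two coefficients become exactly $\Phi_X$ and $\Phi_R$ in \eqref{eq:phiR}--\eqref{eq:phiX}. The right-hand side is maximized pointwise by the bang--bang choice
\[
r=a\,\one\{\Phi_R>0\},\qquad x=\one\{m<a\}\,\one\{\Phi_X>0\}.
\]
With $\lambda=(K-1)/q$ this is retention when $F(m)\ge q$ and allocation when $F(m)<\min(F(a),(v_C/v_D)q)$, which is the candidate rule with $q=q_R(a)$.

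Weak duality becomes tight exactly when complementary slackness holds: either $U_D(a)=0$, or the multiplier is zero. I would split into the two branches of the max in \eqref{eq:qR}.

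\emph{First branch, $q^K=1-(v_D/a)F(a)^K$, no holdback.} Compute
\[
U_D(a)=v_DF(a)^K-a(1-q^K)=0,
\]
which is binding, and then check that $\bar q_R\ge F(a)$.

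\emph{Second branch, holdback binding.} Compute
\[
U_D(a)=v_D\bigl(\tfrac{v_C}{v_D}q\bigr)^K-a(1-q^K),
\]
using $\bar q_R=(v_C/v_D)q<F(a)$. Setting this to zero gives $q^K(a+B_K)=a$, which is exactly \eqref{eq:qR}.

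\textbf{Main obstacle: branch consistency.} I must show that the max in \eqref{eq:qR} selects the correct branch, meaning the holdback branch is active iff $(v_C/v_D)q_R(a)<F(a)$. The approach is to use that $U_D$, viewed as a function of $q$ under the rule, is increasing in $q$ in each regime and continuous across the kink. Then the binding $q$ equals the larger of the two candidate roots. I would first try a direct comparison of the two expressions at the switching point $(v_C/v_D)q=F(a)$, where both formulas coincide. This also covers edge cases where $q_R$ is close to $1$ or $\bar q_R=1$.

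\textbf{Verification and equilibrium.} Since the relaxation value is attained for every $a$, it remains to show that the candidate satisfies the dropped constraints, namely $U_C\ge0$ and $U_C$ strictly increasing. At the optimum, complementary slackness gives $U_D=0$, so
\[
U_C(a)=\int\Phi_X x\,\tfrac{g_C}{K-1}+\dots
\]
collapses to an explicit formula. In the binding case,
\[
U_C(a)=v_C\,\min\bigl(F(a),\bar q_R\bigr)^{K-1}-a\bigl(1-q^{K-1}\bigr).
\]
To get monotonicity I would use an envelope or duality argument instead: the value of the parametric problem is increasing in $a$ because a larger $a$ relaxes the rank constraint $\one\{m<a\}$ and the cap $r\le a$ scales. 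Concretely, a rule feasible at $a$ can be mimicked at $a'>a$ by retaining the same absolute amount, which keeps $U_D\le0$; strictness comes from the strictly positive coefficient $\Phi_X$ on newly eligible states $F(m)\in[F(a),F(a'))$ when allocation is not yet cut off. If that fails in the second branch, I would differentiate the explicit formula.

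Proposition~\ref{prop:target-equilibrium} then gives the equilibrium claims. Since $U_D\equiv0$, non-entry is a best response but is not uniquely selected, and strict monotonicity of $U_C$ yields unique full commitment for every type $\kappa>0$.
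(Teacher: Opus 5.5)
Your proposal is correct. The relaxation, the pointwise maximization with $\lambda=(K-1)/q$, and the choice of the binding root as the larger of the two branch roots are the paper's argument. The larger-root rule holds because the entrant payoff is the minimum of two increasing affine functions of $q^K$. One cosmetic slip: matching $\Phi_X$ and $\Phi_R$ requires $\mu=\lambda/K$, not $(K-1)\lambda/K$.

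You genuinely diverge in verifying the dropped constraints. The paper differentiates each branch formula, obtaining $(K-1)p^{K-2}f(a)(v_Cq-v_Dp)/q+H_K(q)$ without holdback and $H_K(q)$ with holdback. It then needs absolute continuity on $[\varepsilon,1]$, matching at regime switches, and a limit as $a\downarrow0$. Your nesting argument instead uses two facts: the actionwise functionals do not depend on $a$, and the feasible set grows with $a$. Hence the relaxed value, which the candidate attains, is nondecreasing on all of $[0,1]$. It is also nonnegative, since the zero rule is feasible. This holds globally, with no differentiability of $F$.

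For strictness, compare Lagrangians at $\lambda'=(K-1)/q_R(a')$ rather than perturbing primally, since the entrant constraint binds at $a$:
\[
KU_C^*(a')=\sup_{\text{feasible at }a'}\bigl(KU_C-\lambda'U_D\bigr)\geq KU_C^*(a)+(a'-a)\int_0^1KF(m)^{K-2}\bigl[\lambda'F(m)-(K-1)\bigr]_+\dd F(m).
\]
The last term is strictly positive because $q_R(a')<1$. The source of strictness is therefore the enlarged retention cap $r\leq a'$, not the newly eligible allocation states you cite. That allocation term can vanish when allocation is already cut off, which is why you expected to differentiate in the holdback branch. With the retention term, that fallback is unnecessary, and $U_C^*(a)>U_C^*(0)=0$ gives IR as well.

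The paper's route buys explicit derivative formulas. Yours buys a shorter, regularity-free proof that handles branch switches automatically.
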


\begin{proof}[Proof sketch]
Fix $a>0$, write $p=F(a)$, and relax intended-user monotonicity and IR at
this action, using the coefficients in \eqref{eq:phiR}--\eqref{eq:phiX}.
For $q\in(0,1)$, choose $\lambda=(K-1)/q$. Maximizing
$KU_C(a)-\lambda U_D(a)$ pointwise refunds below $F(m)=q$, retains in full
above it, and allocates whenever rank-feasible and
$F(m)<(v_C/v_D)q$. The entrant payoff under this maximizer is
\begin{equation}
 h_a(q)=v_D\min\!\left\{p,\frac{v_C}{v_D}q\right\}^{K}-a(1-q^K).
 \label{eq:entrant-root}
\end{equation}
It is strictly increasing, with limits $-a$ as $q\downarrow0$ and $v_Dp^K$
as $q\uparrow1$, so its unique zero is $q_R(a)$ in \eqref{eq:qR};
\eqref{eq:qbar} is the associated allocation cutoff. The candidate attains
the pointwise maximum and has $U_D(a)=0$.
Its intended-user expected payoff is
\begin{equation}
\begin{aligned}
 U_C^*(a)={}&v_C\min\!\left\{F(a),\frac{v_C}{v_D}q_R(a)\right\}^{K-1}\\[-0.2em]
          &-a\left[1-q_R(a)^{K-1}\right].
\end{aligned}
 \label{eq:optimal-utility}
\end{equation}
Since any rule satisfying the entrant constraint has $U_D(a)\leq0$,
$KU_C(a)\leq KU_C(a)-\lambda U_D(a)\leq KU_C^*(a)$. This is the actionwise
weak-duality certificate.

The branch formulas agree at every switch. For every
$\varepsilon\in(0,1)$, $U_C^*$ is absolutely continuous on
$[\varepsilon,1]$ with $U_C^{*\prime}(a)>0$ almost everywhere,
and $U_C^*(a)\to0$ as $a\downarrow0$. Thus $U_C^*$ is continuous, strictly
increasing, and positive for $a>0$, restoring monotonicity and IR; full algebra
is in the technical appendix. Integrating proves optimality in
Problem~\eqref{eq:design-lp}. The candidate has $U_D(a)=0$ at every action, so
Proposition~\ref{prop:target-equilibrium} gives the stated equilibrium and
best-response properties.
\end{proof}
The closed-form result requires no distributional regularity beyond the
maintained assumptions. Section~\ref{sec:joint-design} shows that the same
value also bounds joint rule-and-policy design.

\paragraph{Why optimality can precede necessity.}
The two-user uniform benchmark below makes the tradeoff transparent. In its
middle regime, the best leader-always-allocate rule supports non-entry by
retaining burdens over a broader set of states. Since the
entrant-to-intended-user likelihood ratio rises with congestion, those
additional lower-congestion states receive relatively more weight under
intended-user play. The optimum instead refunds them and concentrates
screening in the upper tail. It may sacrifice allocation in a small, highly
entrant-diagnostic region to save retained burdens over a much larger region.
Holdback can therefore improve surplus before it is necessary for feasibility;
Figure~\ref{fig:endogenous-reserve} later quantifies this tradeoff.

\paragraph{Holdback activation.}
For own action $a$, the holdback band is
$\mathcal R(a)=\{m:\ \bar q_R(a)\leq F(m)<F(a)\}$. Its conditional
probability under the intended-user state distribution is
$\rho^*(a)=\bigl[F(a)^{K-1}-\bar q_R(a)^{K-1}\bigr]_+$, where
$[z]_+\equiv\max\{z,0\}$. Because at most one account is strictly highest, the
ex ante probability that current allocation is withheld is
$\rho^*=K\int_0^1\rho^*(a)\dd F(a)$.

\begin{corollary}[Holdback activation from primitives]
\label{cor:activation-general}
For $a>0$, define
\begin{equation}
 \Psi_F(a)=F(a)^K\left[\frac{v_D}{a}
              +\left(\frac{v_D}{v_C}\right)^K\right].
 \label{eq:psi}
\end{equation}
The optimal rule uses a positive holdback band at action $a$ iff
$\Psi_F(a)>1$. Consequently, $\rho^*>0$ iff the set
$\{a:\Psi_F(a)>1\}$ has positive $F$-measure.
\end{corollary}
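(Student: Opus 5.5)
The plan is to reduce the activation condition to a sign condition on the entrant-root function $h_a$ from \eqref{eq:entrant-root}, using its strict monotonicity as established in the proof of Theorem~\ref{thm:deposit}. Fix $a>0$ and write $p=F(a)$. Since $F$ is strictly increasing with $F(0)=0$, we have $p\in(0,1]$.

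By definition, the holdback band $\mathcal R(a)=\{m:\bar q_R(a)\le F(m)<p\}$ has positive conditional probability $\rho^*(a)=[p^{K-1}-\bar q_R(a)^{K-1}]_+$ if and only if $\bar q_R(a)<p$. Because $F$ is continuous and strictly increasing, this is the same as the band having positive Lebesgue and $F$-measure in $m$. Since $p\le 1$, the branch $\bar q_R(a)=1$ of \eqref{eq:qbar} never produces a band. So, using \eqref{eq:qbar}, activation is equivalent to
\[
q_R(a)<q_0\equiv \frac{v_D}{v_C}\,p .
\]
Because $v_D<v_C$, the point $q_0$ lies in $(0,1)$.

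Next I invoke the fact, from the proof of Theorem~\ref{thm:deposit}, that $h_a$ is strictly increasing on $(0,1)$ with unique zero $q_R(a)$. Hence $q_R(a)<q_0$ if and only if $h_a(q_0)>0$. At $q_0$ we have $\min\{p,(v_C/v_D)q_0\}=p$, so
\[
h_a(q_0)=v_Dp^K-a\Bigl(1-\bigl(\tfrac{v_D}{v_C}\bigr)^K p^K\Bigr).
\]
Dividing by $a>0$ and rearranging, $h_a(q_0)>0$ holds if and only if $p^K\bigl[v_D/a+(v_D/v_C)^K\bigr]>1$, that is, $\Psi_F(a)>1$. This proves the first claim. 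The step needing the most care is the equivalence between $q_R(a)<q_0$ and $h_a(q_0)>0$. It relies on strict rather than weak monotonicity of $h_a$, and it requires that $q_0$ lie strictly inside the domain on which that monotonicity was established. Both conditions hold here, since $0<p\le1$ and $v_D<v_C$.

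For the second claim, observe that $\rho^*=K\int_0^1\rho^*(a)\dd F(a)$ with $\rho^*(a)\ge0$. The integral of a nonnegative measurable function is positive if and only if the function is positive on a set of positive measure. By the first part, $\{a>0:\rho^*(a)>0\}=\{a>0:\Psi_F(a)>1\}$. The point $a=0$ carries no $F$-mass, because $F$ is continuous, and $\rho^*(0)=0$. Measurability of $a\mapsto\rho^*(a)$ follows from the continuity of $q_R$ in $a$, which is visible in \eqref{eq:qR}. Therefore $\rho^*>0$ if and only if $\{a:\Psi_F(a)>1\}$ has positive $F$-measure.
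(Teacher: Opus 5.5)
Your proof is correct, but it reaches the activation condition by a somewhat different route from the paper.

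The paper works only with the closed form \eqref{eq:qR}. It checks that the second entry $a/(a+B_K)$ strictly exceeds the first exactly when it lies below $[(v_D/v_C)F(a)]^K$. It concludes that the strict holdback inequality $q_R(a)<(v_D/v_C)F(a)$ selects the branch $q_R(a)^K=a/(a+B_K)$. It then substitutes, using $B_K(v_D/v_C)^K=v_D$, to obtain $\Psi_F(a)>1$.

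You instead use the single-crossing structure of $h_a$ from the proof of Theorem~\ref{thm:deposit}. Because $h_a$ is strictly increasing with unique zero $q_R(a)$, the comparison $q_R(a)<q_0$ reduces to the sign of $h_a(q_0)$. At the kink $q_0=(v_D/v_C)F(a)$ the two pieces of the minimum coincide, so $\Psi_F$ appears in one line, with no branch selection and no appeal to $B_K$.

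\textbf{What each route buys.} Your version is shorter and has a clean economic reading. Holdback is active exactly when the entrant would still earn a positive payoff at the last retention cutoff at which the pointwise maximizer allocates in every rank-eligible state. Your version does lean on the root characterization of $q_R(a)$ established in the theorem's proof. The paper's version needs nothing beyond the formula \eqref{eq:qR} itself, and it also identifies explicitly which branch of $q_R$ is active on the holdback set.

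Your treatment of the $\bar q_R(a)=1$ case, of $a=0$, and of measurability for the ex ante claim is sound.
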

\begin{proof}
Holdback is active iff $\bar q_R(a)<F(a)$, equivalently
$q_R(a)<(v_D/v_C)F(a)$. Direct comparison of the two entries in
\eqref{eq:qR} shows that the second strictly exceeds the first exactly when it
lies below $\bigl[(v_D/v_C)F(a)\bigr]^K$. Thus the strict holdback inequality
selects $q_R(a)^K=a/(a+B_K)$; substituting and rearranging gives
$\Psi_F(a)>1$.
Integrating over own actions gives the ex ante statement.
\end{proof}

\paragraph{Necessity boundary.}
Corollary~\ref{cor:activation-general} identifies when the class-optimal rule
chooses holdback. This differs from asking when every pointwise-feasible rule
supporting non-entry must use it; the next result gives that separate boundary.

\begin{proposition}[When allocation holdback is necessary]
\label{prop:reserve-necessary}
Under the normalized retained-burden cap of one, if $v_D>1$, every
pointwise-feasible local rule that weakly supports non-entry has positive ex
ante allocation holdback under the designated profile. In particular, always
allocating to every unique leader cannot support non-entry.
\end{proposition}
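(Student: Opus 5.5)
Argue by contraposition, exploiting the entrant's freedom to choose the top commitment $a=1$. Suppose a pointwise-feasible local rule $(x,r)$ weakly supports non-entry, so $U_D(a)\le 0$ for every $a\in[0,1]$ by Proposition~\ref{prop:target-equilibrium}, and suppose it has zero ex ante holdback under the designated profile. The aim is to show that zero holdback forces $U_D(1)\ge v_D-1>0$ when $v_D>1$, which is a contradiction.

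\textbf{Step 1: translate zero holdback into a pointwise statement.} Under the designated profile, the ex ante probability that the current opportunity is withheld is $K\int_0^1\int_0^a[1-x(a,m)]\dd F(m)^{K-1}\dd F(a)$. Ties have probability zero, so exactly one account is strictly highest almost surely, and this expression is the whole withholding probability. If it equals zero, then $x(a,m)=1$ for $(F\times F^{K-1})$-almost every pair with $m<a$. Because $f>0$ almost everywhere, these measures are equivalent to Lebesgue measure on the relevant sets. Hence $x(a,m)=1$ for Lebesgue-a.e.\ pair $m<a$.

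\textbf{Step 2: bound the entrant's payoff at a high action.} The entrant's payoff is $U_D(a)=\int[v_Dx(a,m)-r(a,m)]\dd F(m)^K$. Pointwise feasibility gives $r\le a\le 1$, and $F(m)^K$ is a probability distribution on $[0,1]$. Therefore
\[
U_D(a)\ \ge\ v_D\int_0^1 x(a,m)\dd F(m)^K - a.
\]
Ideally we would take $a=1$, where $m<1$ almost surely and so $x(1,\cdot)=1$ a.e.\ would give $U_D(1)\ge v_D-1>0$.

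\textbf{Step 3 (main obstacle): the a.e.\ statement may fail on the single slice $a=1$.} Step 1 only delivers $x(a,\cdot)=1$ almost everywhere on $[0,a)$ for almost every $a$, and the fixed action $a=1$ may be exceptional. To handle this, choose $a<1$ from the full-measure set arbitrarily close to $1$. For such $a$,
\[
U_D(a)\ \ge\ v_DF(a)^K-a .
\]
As $a\uparrow1$ the right side tends to $v_D-1>0$, so $U_D(a)>0$ for some admissible $a$. This contradicts weak support of non-entry.

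Hence every such rule has positive ex ante holdback. The ``in particular'' clause follows because the leader-always-allocate rule sets $x=\one\{m<a\}$ identically, so Step 2 applies at $a=1$ directly and gives $U_D(1)=v_D-\int r(1,m)\dd F(m)^K\ge v_D-1>0$.
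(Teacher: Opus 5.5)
Your proof is correct and uses the same ingredients as the paper's: the burden cap $r\le a$ gives $U_D(a)\ge v_D\int x(a,m)\dd F(m)^K-a$, which forces entrant-state shortfall from full allocation near $a=1$. Equivalence of the rival-state measures (your reduction to Lebesgue via $f>0$, the paper's mutual absolute continuity of $F^K$ and $F^{K-1}$) then transfers that shortfall to intended-user states. The only difference is that you argue by contraposition, which creates the Fubini slice issue in your Step 3, whereas the paper argues directly that entrant-state holdback is at least $F(a)^K-a/v_D>0$ for every $a$ in a neighborhood of one and then integrates over that positive-measure set, so no exceptional slice arises.
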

\begin{proof}
For action $a$, let $A_D(a)=\int x(a,m)\dd F(m)^K$. The burden cap and
$U_D(a)\leq0$ imply $A_D(a)\leq a/v_D$. Entrant-state holdback relative to
always allocating is therefore at least $F(a)^K-a/v_D$, which is positive on
a neighborhood of one when $v_D>1$. For every action in that neighborhood,
the measures induced by $F^K$ and $F^{K-1}$ on rival states are mutually
absolutely continuous. Hence positive entrant-state holdback implies positive
intended-user-state holdback; integrating over own actions proves the claim.
\end{proof}
This necessity is conditional on the designated non-entry outcome; it does
not say that exclusion is desirable under an objective that also values
entrant access.

\begin{figure*}[!t]
\centering
\includegraphics[width=0.94\textwidth]{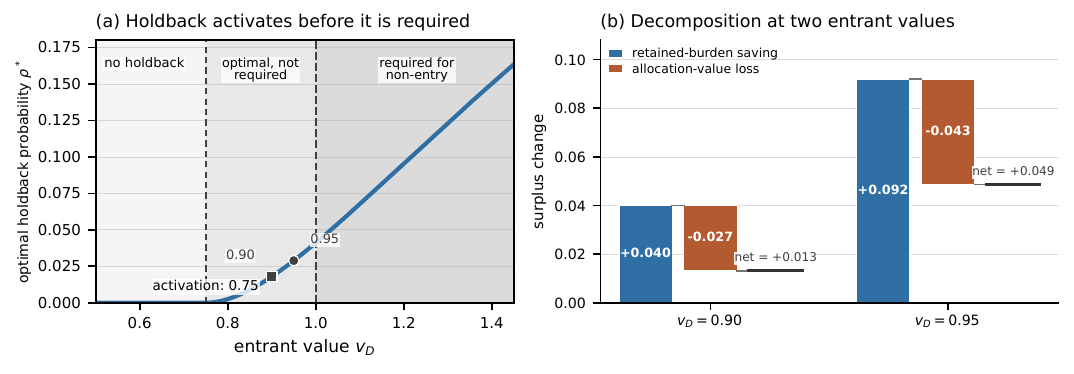}
\caption{Holdback activation and two-point surplus decomposition for $K=2$,
$F(a)=a$, and $v_C=1.5$. Panel (a) marks activation at $v_D=0.75$, the two
comparison values $v_D=0.90$ and $0.95$, and the $v_D=1.00$ boundary above
which a leader-always-allocate rule cannot support non-entry. Panel (b) decomposes
the class-optimal surplus gain relative to the best leader-always-allocate
comparator feasible for Problem~\eqref{eq:design-lp}. Each
waterfall rises with the retained-burden saving and falls with foregone
intended-user allocation value; its labeled right endpoint is the net gain.}
\label{fig:endogenous-reserve}
\end{figure*}

\begin{proposition}[Optimal before necessary in the two-user uniform benchmark]
\label{prop:three-regimes}
Let $K=2$ and $F(a)=a$, and let $\widehat v_D^R(v_C)\in(0,1)$ be the unique
solution to
\begin{equation}
 v+\left(\frac{v}{v_C}\right)^2=1.
 \label{eq:uniform-threshold}
\end{equation}
Then:
\begin{enumerate}
 \item If $v_D\leq\widehat v_D^R(v_C)$, the class-optimal rule uses no
 holdback.
 \item If $\widehat v_D^R(v_C)<v_D\leq1$, the class-optimal rule uses
 positive holdback and yields strictly higher intended-user surplus than any
 holdback-free feasible solution to Problem~\eqref{eq:design-lp}, although
 always allocating to the unique leader can still support entrant non-entry.
 \item If $1<v_D<v_C$, positive holdback is necessary to support non-entry.
\end{enumerate}
\end{proposition}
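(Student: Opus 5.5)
For part 1 and the activation half of part 2 I would specialize Corollary~\ref{cor:activation-general}. With $K=2$ and $F(a)=a$, \eqref{eq:psi} becomes
\[
\Psi_F(a)=v_Da+\Bigl(\frac{v_D}{v_C}\Bigr)^2a^2 .
\]
This is continuous and strictly increasing on $(0,1]$, so its supremum is $\Psi_F(1)=v_D+(v_D/v_C)^2$.

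\emph{Well-definedness of the threshold.} The map $v\mapsto v+(v/v_C)^2$ is strictly increasing, equals $0$ at $v=0$, and equals $1+v_C^{-2}>1$ at $v=1$. Hence \eqref{eq:uniform-threshold} has a unique root $\widehat v_D^R(v_C)\in(0,1)$.

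\emph{Activation.} If $v_D\le\widehat v_D^R$, then $\Psi_F(a)\le\Psi_F(1)\le1$ for all $a$, so the corollary gives $\rho^*=0$; this is part 1. If $v_D>\widehat v_D^R$, then $\Psi_F(1)>1$, and by continuity $\Psi_F>1$ on an interval $(a_0,1]$. That interval has positive $F$-measure, so $\rho^*>0$.

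\emph{Part 3.} This is immediate from Proposition~\ref{prop:reserve-necessary}, because $v_D>1$ there.

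\emph{A leader-always-allocate rule can support non-entry (part 2).} Take $v_D\le1$ and use $x(a,m)=\one\{m<a\}$ together with $r(a,m)=a\one\{m\ge t(a)\}$, where $t(a)=\sqrt{1-v_Da}\in[0,1)$. Then
\[
U_D(a)=v_Da^2-a\bigl(1-t(a)^2\bigr)=0 ,
\]
so every entrant action is weakly unprofitable and non-entry is supported. I read the claim as concerning only entrant non-entry, that is, $U_D\le0$. If feasibility in Problem~\eqref{eq:design-lp} were also wanted, I would need to check monotonicity of $U_C(a)=a\bigl(v_C-1+\sqrt{1-v_Da}\bigr)$. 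This can fail near $a=1$ when $v_D=1$, and there I would lower retention or iron the rule. I would flag this point but not rely on it.

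\emph{Strict surplus gap (main step).} This uses the weak-duality certificate in the proof of Theorem~\ref{thm:deposit}. For each $a>0$, with $\lambda_a=1/q_R(a)$ (here $K-1=1$), any feasible rule satisfies
\[
2U_C(a)\le 2U_C(a)-\lambda_aU_D(a)\le 2U_C^*(a).
\]
The second inequality is an integral over $m$ of pointwise Lagrangian terms. Each term is bounded by its pointwise maximum, so the slack equals the integral of the pointwise losses. Now fix $a\in(a_0,1]$. On the holdback band $\mathcal R(a)=\{m:\bar q_R(a)\le m<a\}$, which has positive length by the activation step, the allocation coefficient $\Phi_X(a,m)=v_C-\lambda_av_Dm$ is strictly negative. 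A holdback-free rule has $x(a,m)=1$ for almost every $m<a$, so it incurs a loss of at least
\[
\int_{\mathcal R(a)}\lvert\Phi_X(a,m)\rvert\,g_C(m)\dd m>0
\]
in the first term, whatever its retention rule. This bound does not depend on the particular rule. Integrating over $a\in(a_0,1]$ against $\dd F$ therefore gives a positive constant $\delta$ such that every holdback-free feasible solution has value at most the optimum minus $\delta$.

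I expect this last step to be the main obstacle. The delicate point is the bookkeeping that separates the retention and allocation parts of the Lagrangian and shows the loss is additive. I would also confirm that ``holdback-free'' in the statement means $x=\one\{m<a\}$ almost everywhere under the designated profile; if it only means zero ex ante holdback, that is equivalent up to null sets by mutual absolute continuity.
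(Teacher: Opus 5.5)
Your proof is correct and follows essentially the paper's route: activation comes from Corollary~\ref{cor:activation-general} with the increasing $\Psi_F(a)=v_Da+(v_D/v_C)^2a^2$, part~3 from Proposition~\ref{prop:reserve-necessary}, and strictness from the strictly negative $\Phi_X$ on the positive-length holdback band, where your separable-loss bookkeeping with a rule-independent $\delta>0$ is a slightly sharper form of the paper's one-line argument. The loose end you flag disappears with the paper's simpler witness, full retention in every state, which gives $U_C(a)=(v_C-1)a$ (nondecreasing and nonnegative) and $U_D(a)=v_Da^2-a\le0$, so a holdback-free rule is genuinely feasible for Problem~\eqref{eq:design-lp} and the strict comparison is not vacuous; by contrast, your cutoff $\sqrt{1-v_Da}$ makes $U_C$ nonmonotone for much of the middle regime (at $v_C=1.5$, for every $v_D$ above roughly $0.81$, not only at $v_D=1$), and any repair must add retention rather than lower it, because your rule already has $U_D\equiv0$.
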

\begin{proof}
Under uniform capacities, $\Psi_F(a)=v_Da+(v_D/v_C)^2a^2$ is strictly
increasing, so Corollary~\ref{cor:activation-general} gives the threshold in
\eqref{eq:uniform-threshold}. The third statement is
Proposition~\ref{prop:reserve-necessary}. For $v_D\leq1$, always allocating
to the strict leader while retaining the full commitment in every state is
feasible: an intended-user action $a$ obtains $(v_C-1)a$ and the entrant
obtains $v_Da^2-a\leq0$, so holdback-free support exists throughout the
middle regime. No holdback-free rule feasible for
Problem~\eqref{eq:design-lp} can attain the class optimum there because
$\Phi_X$ is strictly negative on the positive-measure holdback band, where
allocating cannot attain the dual upper bound.
\end{proof}

For the illustrative parameterization $v_C=1.5$,
equation~\eqref{eq:uniform-threshold} gives the exact activation threshold
$\widehat v_D^R=0.75$.

\paragraph{Benchmark surplus decomposition.}
\label{sec:experiments}

Figure~\ref{fig:endogenous-reserve} first separates the point where holdback
becomes optimal from the boundary above which it becomes necessary. At each of
the two comparison values, the best leader-always-allocate comparator adds only
the requirement to allocate to every unique leader to the local rule class and
constraints of Problem~\eqref{eq:design-lp}. At $v_D=0.90$ and $0.95$,
holdback occurs in $1.793\%$ and $2.887\%$ of designated-profile states. In
both waterfalls, retained-burden savings exceed foregone intended-user
allocation value,
yielding net gains of $0.013$ and $0.049$ ($1.192\%$ and $4.685\%$).
These points illustrate the mechanism rather than calibrate representative
effect sizes; exact comparator construction and accounting are in the
technical appendix.

\section{Joint-Design Bound and Robustness}
\label{sec:joint-design}

Theorem~\ref{thm:deposit} solves the local-rule problem at the designated
profile. We next show that its value also bounds joint choice of a rule and
intended-user policy. A common, independently randomized policy is a
probability kernel $\sigma(\dd a\mid\kappa)$ supported on $[0,\kappa]$, with
private randomizations independent across intended users. A treatment vector
records every intended user's rule-assigned allocation probability and
retained burden.

\begin{lemma}[Full-commitment reduction]
\label{lem:full-reduction}
Fix a pointwise-feasible measurable local rule $(x,r)$ and a common,
independently randomized, capacity-feasible intended-user policy $\sigma$
against which entrant non-entry is a best response. Then there exists a
pointwise-feasible measurable local rule $(\widehat x,\widehat r)$ whose
full-commitment profile reproduces the same unconditional treatment-vector
distribution and aggregate expected intended-user surplus, while preserving
entrant non-entry as a best response.
\end{lemma}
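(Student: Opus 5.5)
The plan is a quantile (monotone rearrangement) construction. Let $G$ be the marginal law of a single intended user's submitted action under $\sigma$, that is, $G(t)=\int\sigma([0,t]\mid\kappa)\dd F(\kappa)$. Private randomizations are independent across users and capacities are i.i.d., so under $\sigma$ the submitted action vector of the $K$ intended users is i.i.d.\ with law $G$. Define the generalized inverse $G^{-1}(u)=\inf\{t:G(t)\geq u\}$ and the nondecreasing map $\varphi=G^{-1}\circ F$ on $[0,1]$. The candidate rule is
\[
\widehat x(a,m)=x\bigl(\varphi(a),\varphi(m)\bigr),\qquad
\widehat r(a,m)=r\bigl(\varphi(a),\varphi(m)\bigr).
\]
It is measurable because $\varphi$ is monotone and $(x,r)$ is measurable.

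The key steps are as follows.

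First, because $F$ is continuous, $F(\kappa)$ is uniform whenever $\kappa\sim F$. Hence $\varphi(\kappa)$ has law $G$, and the vector $(\varphi(\kappa_i))_{i\le K}$ is i.i.d.\ $G$, which is the same joint law as the action vector under $\sigma$.

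Second, monotonicity of $\varphi$ gives $\varphi(\max_{j\ne i}\kappa_j)=\max_{j\ne i}\varphi(\kappa_j)$. Under full commitment, user $i$'s treatment under $(\widehat x,\widehat r)$ is therefore exactly $(x,r)$ evaluated at the transformed action $\varphi(\kappa_i)$ and its strongest transformed rival. As a result, the whole treatment vector has the same unconditional joint distribution as under $(x,r,\sigma)$. Aggregate intended-user surplus depends only on this distribution, since every intended user has the common value $v_C$, so surplus is also reproduced.

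Third, I verify pointwise feasibility.
\begin{itemize}
\item Rank constraint: if $m\ge a$ then $\varphi(m)\ge\varphi(a)$, so $\widehat x(a,m)\le\one\{\varphi(m)<\varphi(a)\}\le\one\{m<a\}$.
\item Burden cap: I need $\varphi(a)\le a$. Capacity feasibility says each action is at most its capacity, so $G(t)\ge F(t)$ for all $t$; that is, $G$ is first-order dominated by $F$. This gives $G^{-1}(F(a))\le a$. Hence $\widehat r(a,m)\le\varphi(a)\le a$.
\end{itemize}

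Fourth, I check non-entry. An entrant deviation to $b\in[0,1]$ under the new rule with full commitment faces $\max_j\kappa_j$. Its payoff is
\[
\E\bigl[v_Dx(\varphi(b),\varphi(\max_j\kappa_j))-r(\varphi(b),\varphi(\max_j\kappa_j))\bigr].
\]
Since $\varphi(\max_j\kappa_j)=\max_j\varphi(\kappa_j)$ has the law of the maximum of $K$ i.i.d.\ $G$ actions, this equals the entrant's payoff from action $\varphi(b)$ against $\sigma$ under the original rule. That payoff is at most $0$ because non-entry was a best response there. Thus non-entry remains a best response.

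I expect the main obstacle to be handling atoms of $G$, for example when $\sigma$ pools at an interior cap. In that case $\varphi$ is constant on intervals, and distinct capacities map to tied transformed actions. I would check two things carefully.
\begin{itemize}
\item The tie structure is reproduced exactly: ties under $\sigma$ occur precisely when i.i.d.\ $G$ draws coincide, and the $\varphi(\kappa_j)$ are i.i.d.\ $G$. So the identity of the joint laws already covers ties.
\item The rank constraint for $\widehat x$ survives ties: the inequality $\one\{\varphi(m)<\varphi(a)\}\le\one\{m<a\}$ handles this.
\end{itemize}
A secondary technical point is the equality $\varphi(F^{-1}\text{-uniform})\sim G$ at jump points of $G^{-1}$. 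This holds almost surely by the standard quantile-transform lemma, and measure-zero exceptions do not affect distributions or expected payoffs.
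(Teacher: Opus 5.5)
Your proposal is correct and is essentially the paper's own argument. It uses the same quantile map $\varphi=Q_G\circ F$, the same composed rule $(x,r)(\varphi(a),\varphi(m))$, the same checks of the rank and burden constraints via monotonicity and $G\geq F$, the identity $\varphi(\max_j\kappa_j)=\max_j\varphi(\kappa_j)$ for reproducing the treatment vector including atoms and ties, and the entrant bound $d_G(\varphi(b))\leq 0$. The only cosmetic fix is to take the infimum defining $G^{-1}$ over $t\in[0,1]$, so that $\varphi(0)=0$ as in the paper's convention $Q_G(0)=0$.
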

\begin{proof}[Proof sketch]
Let $\Theta\sim F$, draw $A$ from $\sigma(\cdot\mid\Theta)$, and let $G$ be the
CDF of $A$. Capacity feasibility gives $G\geq F$. With $Q_G$ the generalized
quantile of $G$, set $\phi=Q_G\circ F$; then $\phi$ is nondecreasing,
$\phi(a)\leq a$, and $\phi(\Theta)\sim G$. Define
$(\widehat x,\widehat r)(a,m)=(x,r)(\phi(a),\phi(m))$.
The burden and rank constraints are preserved, while monotonicity gives
$\phi(\max_j\Theta_j)=\max_j\phi(\Theta_j)$. Hence full commitments reproduce
the i.i.d.\ action distribution, including atoms and ties, and therefore the
unconditional treatment vector and surplus. A transformed entrant action $a$
earns its original payoff from $\phi(a)$, preserving non-entry.
\end{proof}

\begin{corollary}[Joint-design bound]
\label{cor:joint-design}
The value attained in Theorem~\ref{thm:deposit} equals the supremum over all
pairs consisting of a pointwise-feasible measurable local rule and a common,
independently randomized, capacity-feasible intended-user policy for which
entrant non-entry is a best response.
\end{corollary}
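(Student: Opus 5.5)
The plan is to prove the two inequalities separately. The value of Theorem~\ref{thm:deposit} is at most the joint-design supremum, and it is also at least that supremum.

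\emph{Lower bound.} Take the rule $(x^*,r^*)$ from Theorem~\ref{thm:deposit} together with the degenerate full-commitment policy $\sigma(\dd a\mid\kappa)=\delta_\kappa$. This policy is common, trivially independently randomized, and capacity-feasible. By Theorem~\ref{thm:deposit} (via Proposition~\ref{prop:target-equilibrium}), entrant non-entry is a best response against it, since $U_D\equiv0$. So this pair is admissible in the joint problem, and its aggregate surplus equals the value of Problem~\eqref{eq:design-lp}. Hence the supremum is at least that value.

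\emph{Upper bound.} Fix any admissible pair $((x,r),\sigma)$. Lemma~\ref{lem:full-reduction} supplies a pointwise-feasible rule $(\widehat x,\widehat r)$. Under it, the full-commitment profile has the same aggregate intended-user surplus and still makes non-entry a best response. I must then show this surplus is at most the value of Theorem~\ref{thm:deposit}.

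The point is that $(\widehat x,\widehat r)$ need not be feasible for Problem~\eqref{eq:design-lp}: the intended-user monotonicity and IR constraints may fail, because the original policy was not assumed to be an equilibrium. This is why I will compare instead with the actionwise relaxation. Non-entry being a best response means $\widehat U_D(a)\le0$ for every $a\in[0,1]$, where $\widehat U_D$ is computed under full commitment against rival maximum CDF $F^K$. That is exactly the entrant constraint of the relaxation. The full-commitment surplus equals $K\int_0^1\widehat U_C(a)\dd F(a)$, where $\widehat U_C$ is evaluated against $F^{K-1}$.

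Two facts make this identification work. First, since $F$ is continuous, full commitments produce no ties almost surely, so surplus decomposes actionwise as in \eqref{eq:constrained-utility}. Second, the reduction's equality of surplus refers to the aggregate, which is what this integral computes. Once these are in place, the actionwise weak-duality certificate in the proof of Theorem~\ref{thm:deposit} gives $K\widehat U_C(a)\le K\widehat U_C(a)-\lambda\widehat U_D(a)\le KU_C^*(a)$ for each $a>0$. This uses only pointwise feasibility and the entrant constraint. The set $\{a=0\}$ is $F$-null, so it does not matter. Integrating against $\dd F$ bounds the surplus by the Theorem~\ref{thm:deposit} value, and taking the supremum completes the argument.

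The main obstacle is the upper-bound step, specifically justifying that the relaxation, not Problem~\eqref{eq:design-lp} itself, is the right comparison. This requires Theorem~\ref{thm:deposit}'s statement that $(x^*,r^*)$ attains the actionwise upper bound, so that the relaxed optimum coincides with the constrained one. It also requires checking that the entrant's best-response condition translates into the pointwise constraint $\widehat U_D(a)\le0$ at every fixed action. That translation holds because, by the timing assumption, entrant deviations are fixed actions.
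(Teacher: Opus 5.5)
Your proposal is correct and takes essentially the same route as the paper. The candidate rule paired with full commitment gives the lower bound. For the upper bound, you apply the full-commitment reduction (Lemma~\ref{lem:full-reduction}) and then the relaxed actionwise weak-duality certificate, which uses only pointwise feasibility and $\widehat U_D(a)\le 0$, not intended-user monotonicity or IR; your remark about ties is harmless but unnecessary, since each account's treatment depends only on $(a_i,M_i)$.
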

\begin{proof}
Lemma~\ref{lem:full-reduction} maps every such pair to a pointwise-feasible
local rule under full commitments while preserving surplus and non-entry.
The relaxed actionwise certificate in the proof of
Theorem~\ref{thm:deposit} applies without intended-user monotonicity or IR and
bounds its surplus by the theorem's value, which the candidate rule paired
with full commitment attains.
\end{proof}

The enlarged comparison does not require the intended-user policy to be an
equilibrium. It therefore upper-bounds equilibrium-constrained joint design:
symmetric shading, interior pooling, nonmonotone policies, and independent
mixing cannot improve surplus. The technical appendix proves the stronger
almost-everywhere best-response preservation result when the original policy
is itself a symmetric equilibrium and explains how the reduction handles
pooling, atoms, and positive tie probability.

If only an upper bound $\bar v_D<v_C$ is known, designing at $\bar v_D$ solves
the robust joint-design problem requiring non-entry for every
$v_D\leq\bar v_D$; at lower values every positive commitment is strictly
worse than non-entry. The technical appendix gives the formal result.

\paragraph{Strict non-entry under account-creation costs.}
With an entrant-specific fixed account-creation cost $c>0$, every
participation action yields $-c$ while non-entry yields zero against the
designated profile. Thus non-entry is its unique best response there; this
does not imply uniqueness of the overall equilibrium.

\paragraph{False-name portfolios and trusted counts.}
Open identity raises boundary questions about finite identity splitting and
trusted account counts. The next corollary addresses the first for the
class-optimal rule; the technical appendix studies the second.
\begin{corollary}[Finite false-name portfolios]
\label{cor:finite-false-name}
At the designated full-commitment profile, suppose one economic entrant
controls finitely many accounts, values at most one allocation, and bears
additive retained burdens. Every such portfolio is weakly dominated by a
single account at its largest commitment. Hence entrant non-entry remains a
best response.
\end{corollary}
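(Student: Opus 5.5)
We must show that any finite portfolio of entrant accounts, evaluated against the designated full-commitment intended-user profile, yields at most the payoff of a single account submitting the portfolio's largest commitment. Non-entry then follows, because Theorem~\ref{thm:deposit} gives $U_D(a)=0$ for every single action under $(x^*,r^*)$.

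Let the entrant submit commitments $b_1\ge b_2\ge\dots\ge b_n$ alongside the $K$ intended users. Write $Y=\max_i\kappa_i$, which has CDF $F^K$. The entrant values at most one allocation and bears additive burdens, so its portfolio payoff is
\[
\Pi=\E\Bigl[v_D\,\one\{\text{some own account allocated}\}-\sum_{j}r^*(b_j,M_j)\Bigr].
\]
Only a strict leader can be allocated, so allocation to an own account requires $b_1>b_2$ and $b_1>Y$. In that case the account holding $b_1$ faces $M_1=\max(b_2,Y)$. Every other own account $j\ge2$ has its own strongest rival at least $b_1$.

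The plan has three steps. \emph{First}, discard the burdens of accounts $j\ge2$, since they are nonnegative. This can only raise the payoff, so it suffices to compare
\[
\E\bigl[v_Dx^*(b_1,\max(b_2,Y))-r^*(b_1,\max(b_2,Y))\bigr]
\]
with $U_D(b_1)=\E[v_Dx^*(b_1,Y)-r^*(b_1,Y)]$. \emph{Second}, establish monotonicity in the rival argument at fixed $a$. By \eqref{eq:optimal-retention}, $r^*(a,m)$ is nondecreasing in $m$. By \eqref{eq:allocation-rule}, $x^*(a,m)$ is nonincreasing in $m$. Replacing $Y$ by $\max(b_2,Y)\ge Y$ therefore weakly lowers allocation and weakly raises the retained burden pointwise. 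The first account's payoff is thus at most $v_Dx^*(b_1,Y)-r^*(b_1,Y)$ statewise. \emph{Third}, take expectations over $Y\sim F^K$. This gives $\Pi\le U_D(b_1)=0$, so every portfolio is weakly dominated by the single account at $b_1$ and hence by non-entry.

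Two edge cases need checking. If $b_1=0$, then $x^*=r^*=0$ by convention, and the portfolio payoff is $\le0$ because all burdens are nonnegative. Ties among own accounts at $b_1$ are absorbed by the second step, since $M_1=b_1$ makes the top account rank-ineligible.

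The main subtlety is the first step's accounting of allocation. One must justify that no account other than the $b_1$ holder can ever receive the unit. This follows from the rank constraint $x\le\one\{M<a\}$, which holds pointwise for every pointwise-feasible rule. Consequently the allocation indicator equals $x^*(b_1,M_1)$ and not a sum over accounts. The "values at most one allocation" hypothesis is then only needed to rule out double counting, which the single-pass structure already prevents.
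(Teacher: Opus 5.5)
Your proof is correct and follows essentially the paper's argument. You drop the nonnegative burdens of the lower controlled accounts, use that the top account's state payoff under the cutoff rule is weakly decreasing in its strongest rival $\max\{b_2,Y\}$, and integrate against $F^K$ using $U_D(b_1)=0$. Because you treat the monotonicity of $x^*$ and $r^*$ in $m$ separately, own-account ties and ties with the intended-user maximum are handled pointwise, so you do not need the paper's separate tie deletion or its null-event remark.
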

\begin{proof}[Proof sketch]
Under the cutoff rule, the top account's state payoff is weakly decreasing in
its strongest rival. Removing lower controlled identities therefore weakly
raises that payoff and removes only nonallocating accounts with nonnegative
burdens; resolving a top tie likewise helps. The identity $U_D(a)=0$ from
Theorem~\ref{thm:deposit} then bounds expected payoff by zero. The technical
appendix gives the full proof.
\end{proof}
This unilateral designated-profile result is not general
false-name-proofness.

\section{Conclusion}
\label{sec:conclusion}

Within the maintained local class, the optimal rule uses the strongest rival
as a congestion signal. Unlike a conventional reserve that responds to a low
leading bid, it may reject an otherwise eligible leader when the runner-up
commitment is unusually strong. This targeted upper-tail nonallocation is the
central design result. The leader-always-allocate comparison further shows
that it can improve intended-user surplus before it is necessary to support
entrant non-entry.
Together with the actionwise certificate, the full-commitment reduction
extends the same surplus bound to common, independently randomized,
capacity-feasible one-account policies under which entrant non-entry is a best
response. These results assume a common direct-use value, risk neutrality, and
additive enforceable burdens; non-entry is weak without entrant-specific
account-creation costs. When commitment is a poor proxy for intended-use value,
an access-oriented open-identity system may use the runner-up commitment as a
congestion signal and withhold current allocation from an otherwise eligible
unique leader, without fallback allocation.

\ifx\ArxivAcknowledgments\empty\else
\section*{Acknowledgments}
\ArxivAcknowledgments
\fi

\bibliographystyle{plainnat}
\bibliography{references}

\clearpage
\appendix
\setcounter{section}{0}
\setcounter{theorem}{0}
\setcounter{equation}{0}
\setcounter{figure}{0}
\setcounter{table}{0}
\renewcommand{\thesection}{S.\arabic{section}}
\renewcommand{\thesubsection}{\thesection.\arabic{subsection}}
\renewcommand{\thetheorem}{S.\arabic{theorem}}
\renewcommand{\theequation}{S.\arabic{equation}}
\renewcommand{\thefigure}{S.\arabic{figure}}
\renewcommand{\thetable}{S.\arabic{table}}
\renewcommand{\theHsection}{S.\arabic{section}}
\renewcommand{\theHsubsection}{S.\arabic{section}.\arabic{subsection}}
\renewcommand{\theHtheorem}{S.\arabic{theorem}}
\renewcommand{\theHequation}{S.\arabic{equation}}
\renewcommand{\theHfigure}{S.\arabic{figure}}
\renewcommand{\theHtable}{S.\arabic{table}}

\begin{center}
{\Large\bfseries Technical Appendix}
\end{center}
\addcontentsline{toc}{section}{Technical Appendix}

\section{Design Program and Incentive Reduction}

There are $K\geq2$ intended users, each commitment-limited, with the
commitment cap normalized to one, common direct-use value $v_C>1$, and
independent commitment capacities distributed according to a strictly increasing, absolutely
continuous CDF $F$ on $[0,1]$, with density $f>0$ almost everywhere. A
scalable entrant obtains value $v_D\in(0,v_C)$ from obtaining the
opportunity for operational or commercial use. These values, $K$, and $F$ are common
knowledge; an intended user's commitment capacity is private.
Subscript $C$ labels intended-user quantities and $D$ labels the scalable
entrant. Account ownership and controller role are not observable or verifiable, so the
operator cannot condition on whether a submitted account is controlled by an
intended user or by the entrant.

For a submitted profile $\boldsymbol a=(a_j)_{j\in N}$, define account $i$'s
strongest rival as
$M_i(\boldsymbol a_{-i})=\max_{j\in N\setminus\{i\}}a_j$. A measurable anonymous
local mechanism applies common functions $x$ and $r$, giving account $i$
allocation probability $x(a_i,M_i)\in[0,1]$ and retention $r(a_i,M_i)\in[0,a_i]$.
Allocation is feasible only when $M_i<a_i$. The rule is account-count-independent
and otherwise depends only on $(a_i,M_i)$. In the focal-account calculations,
$a$ denotes an own action and $m$ a realization of $M_i$. Intended users and
the entrant are risk neutral, burdens are linear and additive, and
state-dependent retention is enforceable.
We call a measurable local rule pointwise feasible if
$0\leq x(a,m)\leq\one\{m<a\}$ and $0\leq r(a,m)\leq a$ for every
$(a,m)\in[0,1]^2$.

Commitment capacity is conceptually distinct from use value and caps only
enforceable exposure. An intended user may therefore value access highly while
having little enforceable liquidity. Each intended user privately observes her
commitment capacity. Without observing other intended users' realized
capacities or any other submitted commitments, intended users and the entrant
choose commitments simultaneously; the entrant may abstain. The
committed operator then observes the profile and applies the rule. Entrant
deviations are therefore fixed actions $a$, not policies $a(m)$ chosen after
observing congestion. A rank-eligible state with $x(a,m)=0$ withholds the
current opportunity and provides no continuation value in the model.

Against the designated full-commitment intended-user profile, an intended user
faces the maximum of $K-1$ independent capacities, whose CDF is $F^{K-1}$,
whereas a fixed entrant action faces the maximum of all $K$ capacities, whose
CDF is $F^K$. The action utilities are
\begin{align}
 U_C(a)&=\int_0^1\left[v_Cx(a,m)\one\{m<a\}-r(a,m)\right]
       \dd F(m)^{K-1},\label{eq:supp-UC}\\
 U_D(a)&=\int_0^1\left[v_Dx(a,m)\one\{m<a\}-r(a,m)\right]
       \dd F(m)^K.\label{eq:supp-UD}
\end{align}
A type $\kappa$ can choose every $a\leq\kappa$. Hence the designated strategy
$a=\kappa$ is optimal for every type if and only if $U_C$ is nondecreasing. The
normalization $U_C(0)=0$ then makes nonnegative utility equivalent to Bayesian
interim individual rationality; ex post IR is not imposed. Weak support for
non-entry is the continuum of inequalities
$U_D(a)\leq0$. The design problem is the infinite-dimensional linear program
\[
 \max_{x,r}\ K\int_0^1U_C(a)\dd F(a)
 \quad\text{subject to}\quad
 U_C\ \text{nondecreasing},\quad U_C\geq0,\quad U_D\leq0,
\]
together with pointwise allocation feasibility and $0\leq r(a,m)\leq a$.
The proof below constructs a feasible candidate and an explicit weak-duality
certificate. It does not invoke strong duality in a function space.

\subsection{Full-commitment reduction and equilibrium preservation}

The next result distinguishes a reduction of the design problem from an
assumption about behavior under a fixed rule. A common, independently
randomized policy is a probability kernel $\sigma(\dd a\mid\kappa)$ supported
on $[0,\kappa]$; private randomizations are independent across intended users.
Pure strategies are degenerate kernels. When equilibrium is invoked for such
a policy, it means a Bayesian equilibrium in behavioral strategies: for every
type, the kernel assigns probability one to best responses in that type's
feasible action set. For any realization of intended-user actions when the
entrant stays out, the treatment vector records each intended user's
rule-assigned allocation probability and retained burden. Aggregate expected
intended-user surplus is the expected
sum of allocation value minus retained burden across the $K$ intended users.

\begin{proposition}[Full-commitment reduction]
\label{prop:supp-full-normalization}
Fix a pointwise-feasible measurable local rule $(x,r)$ and any common,
independently randomized capacity-feasible intended-user policy $\sigma$. Let
$G$ be its unconditional action CDF. If entrant non-entry is a best response
against the induced action distribution, there is another pointwise-feasible
measurable local rule $(\widehat x,\widehat r)$ with the following properties.
\begin{enumerate}
\item Full capacity commitments under $(\widehat x,\widehat r)$ induce the same
unconditional distribution of treatment vectors, and
the same aggregate expected intended-user surplus, as $\sigma$ under $(x,r)$.
\item Entrant non-entry remains a best response against the transformed
intended-user action distribution.
\item If the profile $(\sigma,\ldots,\sigma,\varnothing)$ is a symmetric
Bayesian equilibrium, full commitment under $(\widehat x,\widehat r)$ is a best
response for $F$-almost every intended-user type.
\end{enumerate}
The result permits nonmonotone pure strategies, mixing, pooling, and atoms in
$G$.
\end{proposition}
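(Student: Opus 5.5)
The plan is to build the transformed rule by a monotone quantile coupling and then check the three claims in turn. Let $\Theta\sim F$, draw $A\sim\sigma(\cdot\mid\Theta)$, and let $G$ be the unconditional CDF of $A$. Since $A\leq\Theta$ almost surely, $G(t)=\Prb(A\leq t)\geq\Prb(\Theta\leq t)=F(t)$ for all $t$. With the generalized quantile $Q_G(u)=\inf\{t:G(t)\geq u\}$, define $\phi=Q_G\circ F$ on $[0,1]$ and
\[
(\widehat x,\widehat r)(a,m)=(x,r)(\phi(a),\phi(m)).
\]
The basic properties of $\phi$ are as follows. It is nondecreasing as a composition of nondecreasing maps. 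It satisfies $\phi(a)\leq a$ because $G(a)\geq F(a)$ implies $Q_G(F(a))\leq a$. Finally, $\phi(\Theta)\sim G$, because $F(\Theta)$ is uniform when $F$ is continuous, and $Q_G$ of a uniform variable has CDF $G$.

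Pointwise feasibility of $(\widehat x,\widehat r)$ then follows directly. First, $\widehat r(a,m)\leq\phi(a)\leq a$. Second, $\widehat x(a,m)\leq\one\{\phi(m)<\phi(a)\}\leq\one\{m<a\}$, since monotonicity of $\phi$ makes $\phi(m)<\phi(a)$ force $m<a$. Measurability is preserved by composition with a Borel monotone map.

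For claim 1, take i.i.d.\ capacities $\Theta_1,\dots,\Theta_K$ under full commitment. Because $\phi$ is nondecreasing, $\max_{j\neq i}\phi(\Theta_j)=\phi(\max_{j\neq i}\Theta_j)$. Hence intended user $i$'s treatment under $(\widehat x,\widehat r)$ is $(x,r)(\phi(\Theta_i),\max_{j\neq i}\phi(\Theta_j))$. The vector $(\phi(\Theta_j))_j$ is i.i.d.\ $G$, and so is the action vector $(A_j)_j$ under $\sigma$, because the randomizations are independent across users. So the treatment vector is the same measurable function of two identically distributed action vectors, and the two have the same joint law. This holds including atoms and ties, since ties among the $\phi(\Theta_j)$ reproduce ties among the $A_j$ in law. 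Aggregate surplus is $\sum_i[v_Cx-r]$ of the treatment vector, so it is equal as well.

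For claim 2, an entrant action $a$ against the transformed profile earns
\[
v_D\,x(\phi(a),\max_j\phi(\Theta_j))-r(\phi(a),\max_j\phi(\Theta_j)).
\]
This has the same law as the original payoff from action $\phi(a)\in[0,1]$ against $\sigma$, which is at most zero by hypothesis. So non-entry remains a best response.

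Claim 3 is the main obstacle. A type $\kappa$ deviating to $a\leq\kappa$ under the new rule obtains the original payoff of action $\phi(a)$ against $\sigma$-opponents, again by the max-commuting identity for the opponents. Call this $V(\phi(a))$, where $V$ is the original interim action value. The available deviations therefore have values $\{V(\phi(a)):a\leq\kappa\}=\{V(b):b\in\phi([0,\kappa])\}$, and full commitment yields $V(\phi(\kappa))$. So the task is to show that $V(\phi(\kappa))=\sup_{b\leq\phi(\kappa),\,b\in\phi([0,1])}V(b)$ for $F$-a.e.\ $\kappa$.

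I would attack this via the equilibrium property. The value $\phi(\kappa)$ is the $F(\kappa)$-quantile of $G$, and $\phi(\Theta)$ has the same law as $A$. I would argue that for $G$-a.e.\ action $b$ there is some type $\theta\geq b$ with $\sigma(\cdot\mid\theta)$ placing $b$ among its best responses, so that $V(b)\geq V(b')$ for all $b'\leq\theta$. What remains is to transfer this to the comonotone type $\kappa$ with $\phi(\kappa)=b$, needing only $b'\leq b$. The delicate point is that the original supporting type $\theta$ can differ from $\kappa$. The key observation is that optimality among $[0,\theta]\supseteq[0,b]$ already covers all $b'\leq b$. I would therefore show that the set of $b$ in the support of $G$ that fail to maximize $V$ on $[0,b]$ is $G$-null. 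Its preimage under $\phi$ is then $F$-null, using $\phi(\Theta)\sim G$. This is also why only an almost-everywhere conclusion is obtained.
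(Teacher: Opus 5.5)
Your proposal is correct and follows essentially the paper's proof. It uses the same map $\phi=Q_G\circ F$ and transformed rule $(x,r)(\phi(a),\phi(m))$, and the same max-commuting identity for claims~1 and~2. For claim~3 it amounts to the paper's lower-record set $\mathcal L=\{c: u_G(c)\geq u_G(b)\ \text{for every } b\leq c\}$: this set has $G$-probability one because each equilibrium action $A\leq\Theta$ is optimal on $[0,\Theta]\supseteq[0,A]$, and it is pulled back to $F$-almost every type via $\phi(\Theta)\sim G$. One simplification is available: you need not find a supporting type for each $b$, since every type's kernel puts mass one on best responses in $[0,\theta]$, all of which lie in $\mathcal L$, so $\Prb(A\in\mathcal L)=1$ follows directly (the paper notes $\mathcal L$ is universally measurable, which suffices).
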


The first claim is unconditional. It does not preserve the joint distribution
of private capacities and treatments or type-by-type interim payoffs. The final
claim is only an almost-everywhere best-response statement; it need not preserve
an equilibrium under the typewise convention used in the main paper.

\begin{proof}
Let $\Theta\sim F$, draw $A$ from $\sigma(\cdot\mid\Theta)$, and write
\[
 G(t)=\Prb(A\leq t).
\]
Capacity feasibility gives $A\leq\Theta$ almost surely, so
$G(t)\geq F(t)$ for every $t$. Let
\[
 Q_G(u)=\inf\{t\in[0,1]:G(t)\geq u\},\qquad
 \phi(a)=Q_G(F(a)),
\]
with $Q_G(0)=0$. Since $F$ is continuous and strictly increasing,
$F(\Theta)$ is uniform. Standard quantile coupling therefore gives
\[
 \phi(\Theta)\sim G,\qquad
 \phi\ \text{nondecreasing},\qquad
 \phi(a)\leq Q_F(F(a))=a.
\]
The last inequality follows from $G\geq F$.

Define the transformed rule by
\begin{equation}
 \widehat x(a,m)=x(\phi(a),\phi(m)),\qquad
 \widehat r(a,m)=r(\phi(a),\phi(m)).
 \label{eq:supp-transformed-rule}
\end{equation}
It is measurable, anonymous, account-count-independent, and uses only the own
action and strongest rival. Moreover,
$0\leq\widehat r(a,m)\leq\phi(a)\leq a$. If $m\geq a$, monotonicity of
$\phi$ gives $\phi(m)\geq\phi(a)$, so original strict-leader feasibility gives
$\widehat x(a,m)=0$. Thus the transformation stays in the local rule class.

Under full capacity commitments, the effective commitments
$\phi(\Theta_1),\ldots,\phi(\Theta_K)$ are independent draws from $G$. Also,
\[
 \phi\!\left(\max_{j\ne i}\Theta_j\right)
   =\max_{j\ne i}\phi(\Theta_j).
\]
Consequently, applying \eqref{eq:supp-transformed-rule} to the full-commitment
profile gives exactly the same distribution of action-dependent allocations
and retentions as applying $(x,r)$ to the original policy. Because intended
users have a common direct-use value and capacity affects utility only through
the action constraint, expected intended-user surplus is identical.

For completeness, define the original fixed-action payoffs against $G$ by
\begin{align}
 u_G(c)&=\int_{[0,1]}[v_Cx(c,m)-r(c,m)]\,\dd G(m)^{K-1},
 \label{eq:supp-general-u}\\
 d_G(c)&=\int_{[0,1]}[v_Dx(c,m)-r(c,m)]\,\dd G(m)^K.
 \label{eq:supp-general-d}
\end{align}
These are Stieltjes integrals and therefore include atoms and tie
probabilities. A transformed entrant action $a$ yields
$d_G(\phi(a))$. Original non-entry support says $d_G(c)\leq0$ for every
$c\in[0,1]$, proving the second claim.

Suppose now that $\sigma$ is a symmetric Bayesian equilibrium. Under the joint
law of $(\Theta,A)$, a chosen action is a best response in $[0,\Theta]$ almost
surely. Since $A\leq\Theta$, this implies
\[
 A\in\mathcal L
 :=\left\{c:u_G(c)\geq u_G(b)\ \text{for every }b\leq c\right\}
 \quad\text{almost surely}.
\]
Because $u_G$ is Borel measurable, this lower-record set is universally
measurable; probability-one statements use the completion of the relevant
probability measure. Under the additional regularity that $u_G$ is
upper semicontinuous, $\mathcal L$ is Borel. No continuity is needed for the
surplus-preservation or upper-bound parts of the reduction.
Hence $\Prb(A\in\mathcal L)=1$, and, because $\phi(\Theta)\sim G$,
$\phi(\kappa)\in\mathcal L$ for $F$-almost every $\kappa$. Under the
transformed rule, full capacity commitment by
type $\kappa$ yields $u_G(\phi(\kappa))$. A deviation to $b\leq\kappa$ yields
$u_G(\phi(b))$; because $\phi$ is nondecreasing,
$\phi(b)\leq\phi(\kappa)$, and membership in $\mathcal L$ makes the deviation
unprofitable. This proves the final claim.
\end{proof}

Pooling causes no difficulty. A flat part of $\phi$ can map distinct submitted
commitments to the same effective commitment. The original strict-leader
constraint then assigns zero at the effective tie, so the transformed rule
reproduces the tie loss as deliberate holdback. Full commitment need only be a
weak best response after such a reduction. The optimal rule derived below
separately delivers a unique full-commitment best response through strict
increase of $U_C^*$.

\section{Congestion Sorting and Holdback Necessity}

\begin{lemma}[Upper-tail likelihood ratio]
An intended user choosing a fixed action faces strongest-rival density
\[
 g_C(m)=(K-1)F(m)^{K-2}f(m),
\]
whereas the entrant choosing the same fixed action faces
\[
 g_D(m)=KF(m)^{K-1}f(m).
\]
The likelihood ratio is $g_D(m)/g_C(m)=[K/(K-1)]F(m)$ almost everywhere and is
strictly increasing.
\end{lemma}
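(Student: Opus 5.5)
The plan is a direct computation of the two order-statistic densities, followed by a one-line monotonicity argument.

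First, I would identify the relevant strongest-rival distributions under the designated profile. For an intended user choosing a fixed action, the rivals are the other $K-1$ intended users, who commit fully. Their commitments are therefore the independent capacities $\kappa_j\sim F$, and the strongest rival $M_i=\max_{j\ne i}\kappa_j$ has CDF $F(m)^{K-1}$. An entrant choosing a fixed action instead faces all $K$ intended users, so its strongest rival has CDF $F(m)^K$. This is the same fact already used in \eqref{eq:constrained-utility}--\eqref{eq:entrant-utility}. Neither CDF depends on the focal account's own action, because actions are chosen simultaneously and the rivals' capacities are independent.

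Second, I would differentiate. Since $F$ is absolutely continuous, so is $F^n$ for each $n\ge1$: it is a composition of a Lipschitz map on $[0,1]$ with an absolutely continuous function. The chain rule for absolutely continuous functions then gives
\[
\frac{\mathrm d}{\mathrm dm}F(m)^{n}=nF(m)^{n-1}f(m)
\quad\text{for Lebesgue-a.e. }m.
\]
Taking $n=K-1$ and $n=K$ yields $g_C$ and $g_D$ as stated. When $K=2$, the factor $F^{K-2}$ is identically one, so $g_C=f$, which is consistent with the general formula.

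Third, I would form the ratio. Wherever $g_C(m)>0$, that is, wherever $f(m)>0$ and $F(m)>0$, the common factor $F(m)^{K-2}f(m)$ cancels and leaves $g_D/g_C=[K/(K-1)]F(m)$. The set where the ratio is undefined is $\{f=0\}\cup\{F=0\}$. The first set is null by assumption. The second is the single point $\{0\}$, because $F$ is strictly increasing with $F(0)=0$. So the identity holds almost everywhere. Strict monotonicity then follows because the expression $[K/(K-1)]F(m)$ is strictly increasing in $m$ whenever $F$ is strictly increasing.

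The only point needing care is the interpretation of ``strictly increasing''. The ratio computed from the densities is defined only almost everywhere, so I would state monotonicity for its canonical version $[K/(K-1)]F(m)$, which is defined everywhere. No substantive obstacle remains.
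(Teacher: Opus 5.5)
Your proposal is correct and follows the paper's proof: you compute the order-statistic densities $nF^{n-1}f$ for $n=K-1$ and $n=K$, cancel the common factor, and obtain strict monotonicity from that of $F$. Your almost-everywhere bookkeeping makes explicit what the paper leaves implicit. The paper's proof also sketches an exchange (Neyman--Pearson) argument, showing that the increasing ratio pushes retained burdens into an upper set of states. That implication is not part of the lemma as stated, so leaving it out is not a gap.
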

\begin{proof}
The maximum of $n$ independent capacities has CDF $F^n$ and density
$nF^{n-1}f$. Substitution gives the displayed densities and ratio. Since $F$
is strictly increasing, so is the ratio. For the rearrangement implication,
write the intended-user-state measure as $\mu_C$ and the entrant-state measure
as $\mu_D=L\mu_C$, where $L$ is increasing. If two equal-$\mu_C$ pieces of a
burden are exchanged from states $m_1<m_2$, the change in entrant screening is
proportional to $L(m_2)-L(m_1)\geq0$. Iterating such exchanges assigns any
fixed expected intended-user burden to an upper set. This is the
monotone-likelihood-ratio or Neyman--Pearson ordering used in the pointwise
program.
\end{proof}

For a fixed action, attach multiplier $\lambda\geq0$ to the entrant constraint. Up
to a common positive factor, the marginal state values of maximal retention
and current allocation are
\[
 \Phi_R(m)=\lambda F(m)-(K-1),
 \qquad
 \Phi_X(m)=(K-1)v_C-\lambda v_DF(m).
\]
The first rises and the second falls with congestion. Moreover,
$\Phi_X(m)<0$ implies
$\lambda F(m)>(K-1)v_C/v_D>K-1$, hence $\Phi_R(m)>0$. Thus allocation is never
withheld while the less destructive retention instrument remains unused.

\begin{proposition}[Holdback necessity under bounded retained burden]
If $v_D>1$, every pointwise-feasible local rule that weakly supports non-entry
and has retained burden bounded by one must use positive ex ante holdback
under the designated profile.
\end{proposition}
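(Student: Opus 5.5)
The plan is to reproduce the main-text argument for Proposition~\ref{prop:reserve-necessary} in a more careful form: derive an actionwise upper bound on entrant-state allocation, show that this bound falls below always-allocate near the top of the action space, and then transfer positive holdback from entrant states to intended-user states by mutual absolute continuity. Fix a pointwise-feasible local rule $(x,r)$ with $0\leq r(a,m)\leq a\leq 1$ that weakly supports non-entry, so $U_D(a)\leq0$ for every $a\in[0,1]$.

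For each action $a$, define entrant-state allocation $A_D(a)=\int x(a,m)\dd F(m)^K$. Since $r(a,m)\leq a$ and $F(m)^K$ is a probability measure, $U_D(a)\leq0$ gives
\[
v_DA_D(a)\leq\int r(a,m)\dd F(m)^K\leq a .
\]
Hence $A_D(a)\leq a/v_D$. Always allocating to a strict leader would give $\int\one\{m<a\}\dd F(m)^K=F(a)^K$, because ties have probability zero under continuous $F$.

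The entrant-state holdback mass at action $a$ is
\[
H_D(a)=\int\bigl[\one\{m<a\}-x(a,m)\bigr]\dd F(m)^K\geq F(a)^K-a/v_D .
\]
At $a=1$ the right side equals $1-1/v_D>0$ when $v_D>1$. By continuity of $F$ it stays positive on an interval $(1-\delta,1]$, which has positive $F$-measure because $f>0$ almost everywhere and $F$ is strictly increasing.

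The transfer step goes as follows. For such $a$, the nonnegative function $\one\{m<a\}-x(a,m)$ has positive integral against $g_D$. From the likelihood-ratio lemma, $g_D=\tfrac{K}{K-1}F\,g_C$ with $F(m)>0$ for $m>0$, so the measures $F^K$ and $F^{K-1}$ are mutually absolutely continuous on $(0,1]$. Concretely, $F^K(\{0\})=F(0)^K=0$, and the same holds for $F^{K-1}$ when $K\geq2$. Positive $F^K$-measure of the set where the integrand is positive therefore implies positive $F^{K-1}$-measure of that set. This yields $\rho(a):=\int[\one\{m<a\}-x(a,m)]\dd F(m)^{K-1}>0$ for every $a\in(1-\delta,1]$.

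The main obstacle is the measurability and aggregation bookkeeping in the last step, and I would handle it explicitly. The function $\rho$ is measurable in $a$ by Fubini, since $x$ is jointly measurable. The ex ante holdback under the designated profile is $K\int_0^1\rho(a)\dd F(a)$, using that at most one account is a strict leader. This integral is positive because $\rho>0$ on a set of positive $F$-measure.

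The final sentence of the statement then follows directly. A rule that always allocates to every unique leader has $x(a,m)=\one\{m<a\}$ almost everywhere, hence zero holdback, which contradicts what was just shown.
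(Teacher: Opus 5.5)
Your proposal is correct and follows the paper's proof step for step: the actionwise bound $A_D(a)\leq a/v_D$ from the burden cap, the lower bound $F(a)^K-a/v_D$ on entrant-state holdback (positive near $a=1$ when $v_D>1$), the transfer to intended-user states via mutual absolute continuity of the measures induced by $F^K$ and $F^{K-1}$, and integration over own actions. Your additions, namely that only $F^K\ll F^{K-1}$ is actually needed and the Fubini measurability of $\rho$, simply make explicit bookkeeping that the paper leaves implicit.
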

\begin{proof}
For action $a$, let $A_D(a)=\int x(a,m)\dd F(m)^K$. Since $r(a,m)\leq a$,
$0\geq U_D(a)\geq v_DA_D(a)-a$, whence $A_D(a)\leq a/v_D$. Entrant-state
holdback relative to always allocating to the unique leader is at least $F(a)^K-a/v_D$. This is
$1-1/v_D>0$ at $a=1$ and remains positive on a neighborhood of one. The
measures induced by $F^K$ and $F^{K-1}$ on rival states are mutually
absolutely continuous for every action in that neighborhood, so positive
entrant-state holdback implies positive intended-user-state holdback there.
Integrating over its positive $F$-measure proves positive ex ante holdback.
\end{proof}

\section{Complete Proof of the Optimal Local Rule}

Define
\[
 B_K=\frac{v_C^K}{v_D^{K-1}},
 \qquad p=F(a),
\]
and, for $a>0$,
\begin{align}
 q_R(a)^K&=\max\left\{1-\frac{v_D}{a}F(a)^K,
                         \frac{a}{a+B_K}\right\},\label{eq:supp-q}\\
 \bar q_R(a)&=\min\left\{1,\frac{v_C}{v_D}q_R(a)\right\}.
 \label{eq:supp-qbar}
\end{align}
The candidate retains the full amount iff $F(m)\geq q_R(a)$ and allocates iff
$F(m)<\min\{F(a),\bar q_R(a)\}$. Set $x(0,m)=r(0,m)=0$.

\subsection{Fixed-action weak-duality certificate}
Fix $a>0$ and abbreviate \eqref{eq:supp-UC}--\eqref{eq:supp-UD} by $U_C$ and
$U_D$. Since
\[
 \dd F(m)^{K-1}=(K-1)F(m)^{K-2}\dd F(m),
 \quad
 \dd F(m)^K=KF(m)^{K-1}\dd F(m),
\]
for every $\lambda\geq0$,
\begin{align}
 KU_C-\lambda U_D
 =\int_0^1 KF(m)^{K-2}\Big(&[(K-1)v_C-\lambda v_DF(m)]
       x(a,m)\one\{m<a\}\notag\\[-0.2em]
 &+[-(K-1)+\lambda F(m)]r(a,m)\Big)\dd F(m).
 \label{eq:supp-dual}
\end{align}
Choose $\lambda=(K-1)/q$ with $q\in(0,1)$. The coefficient on retention is
negative below $F(m)=q$ and positive above it, so pointwise maximization sets
$r=0$ below the cutoff and $r=a$ above it. The allocation coefficient is
positive below $F(m)=(v_C/v_D)q$ and negative above it. Combining this sign
with rank feasibility yields the three bands: refund and allocate, retain and
allocate, and retain and withhold current allocation.

Under that rule, entrant payoff from action $a$ is
\begin{equation}
 h_a(q)=v_D\min\left\{p,\frac{v_C}{v_D}q\right\}^{K}
        -a(1-q^K).
 \label{eq:supp-h}
\end{equation}
It is continuous and strictly increasing. On the branch where the minimum is
$p$, only the retained-burden term varies and has positive derivative. On
the other branch, both the value term and the release of retained burden
have positive derivatives. Also $h_a(0)=-a$ and $h_a(1)=v_Dp^K>0$, so there is
a unique zero.

If $(v_C/v_D)q\geq p$, the zero of \eqref{eq:supp-h} is
\[
 q^K=1-\frac{v_D}{a}p^K.
\]
If $(v_C/v_D)q<p$, it is
\[
 v_D\left(\frac{v_C}{v_D}\right)^Kq^K-a(1-q^K)=0,
 \qquad
 q^K=\frac{a}{a+B_K}.
\]
At the branch boundary $q=(v_D/v_C)p$, the two formulas coincide. Because the
zero is unique, the consistent solution is precisely the maximum in
\eqref{eq:supp-q}. Both entries of the maximum are at most one, the second is
strictly positive, and therefore $q_R(a)\in(0,1)$.

The candidate attains the pointwise maximum in \eqref{eq:supp-dual} and has
$U_D(a)=0$. For any rule satisfying the entrant constraint,
$U_D(a)\leq0$, so
\[
 KU_C(a)\leq KU_C(a)-\lambda U_D(a)
 \leq KU_C^*(a)-\lambda\cdot0=KU_C^*(a).
\]
This proves fixed-action optimality in the relaxation that omits the
designated-profile monotonicity and interim-IR constraints.
The candidate utility is allocation value minus retained burden:
\begin{equation}
 U_C^*(a)=v_C\min\left\{p,\frac{v_C}{v_D}q_R(a)\right\}^{K-1}
       -a\left[1-q_R(a)^{K-1}\right].
 \label{eq:supp-UCstar}
\end{equation}
Equivalently,
\begin{equation}
 U_C^*(a)=
 \begin{cases}
 v_Cp^{K-1}-a+aq^{K-1},
   & (v_C/v_D)q\geq p,\\[0.3em]
 (a+B_K)q^{K-1}-a=a(q^{-1}-1),
   & (v_C/v_D)q<p,
 \end{cases}
 \label{eq:supp-UCbranches}
\end{equation}
where $q=q_R(a)$.

\subsection{Designated-profile incentive compatibility and individual rationality}
Define
\[
 H_K(q)=\frac{q^K+(K-1)-Kq}{Kq}.
\]
For every $a>0$, both terms inside the maximum defining $q_R(a)^K$ are
strictly below one, while the second is strictly positive. Hence
$q_R(a)\in(0,1)$. The numerator defining $H_K$ equals zero at $q=1$ and has
derivative $K(q^{K-1}-1)<0$ on $(0,1)$, so $H_K(q_R(a))>0$.

On a holdback-free interval, $q^K=1-(v_D/a)p^K$. At points of
differentiability,
\[
 Kq^{K-1}q'
 =-v_D\left(\frac{Kp^{K-1}f(a)}{a}-\frac{p^K}{a^2}\right).
\]
Differentiating the first line of \eqref{eq:supp-UCbranches}, substituting this
identity, and using $v_Dp^K/a=1-q^K$ yields
\begin{equation}
 \frac{\dd U_C^*(a)}{\dd a}
 =(K-1)p^{K-2}f(a)\frac{v_Cq-v_Dp}{q}+H_K(q).
 \label{eq:supp-UCprime-free}
\end{equation}
The branch condition $(v_C/v_D)q\geq p$ makes the first term nonnegative, so
the derivative is strictly positive almost everywhere.

On a holdback interval, $q^K=a/(a+B_K)$ and
$U_C^*(a)=a(q^{-1}-1)$. Log differentiation gives
$q'/q=B_K/[Ka(a+B_K)]$. Therefore
\[
 \frac{\dd U_C^*(a)}{\dd a}
 =q^{-1}-1-\frac{B_K}{Kq(a+B_K)}
 =H_K(q)>0.
\]
At a regime switch, $q=(v_D/v_C)p$ and the two lines in
\eqref{eq:supp-UCbranches} coincide.

For every $\varepsilon\in(0,1)$, the two expressions inside the maximum in
\eqref{eq:supp-q} are absolutely continuous on $[\varepsilon,1]$, and the
second is bounded away
from zero. Hence $q_R$ and $U_C^*$ are absolutely continuous there. The displayed
derivatives exist and are strictly positive almost everywhere, implying strict
increase on each such interval. Moreover, in the holdback-free branch
$0<U_C^*(a)\leq v_CF(a)^{K-1}\to0$, while in the holdback branch
$U_C^*(a)=a^{1-1/K}(a+B_K)^{1/K}-a\to0$. Together with
$U_C^*(0)=0<U_C^*(a)$ for $a>0$, this proves that $U_C^*$ is strictly increasing on
$[0,1]$.

In the holdback branch, $U_C^*(a)=a(q^{-1}-1)>0$. In the holdback-free branch,
\[
 a(1-q^{K-1})\leq a(1-q^K)=v_Dp^K<v_Cp^{K-1},
\]
so \eqref{eq:supp-UCstar} is strictly positive for every $a>0$. Thus, against
the designated opponents' play, full commitment is the unique best response of
each type $\kappa>0$; interim IR also holds. Since the candidate attains the fixed-action
upper bound for every $a$, integrating against $K\dd F(a)$ proves optimality
for intended-user surplus. The proof requires neither a hazard-rate restriction
nor ironing.

\begin{corollary}[Optimality over symmetric capacity-feasible policies]
\label{cor:supp-policy-optimality}
Enlarge the design comparison to all pairs consisting of a feasible local rule
and a common, independently randomized intended-user policy supported on each
type's feasible action set. Impose only that entrant non-entry be a best
response; the intended-user policy need not itself be an equilibrium. The
supremum of intended-user surplus in this enlarged comparison is
\[
 K\int_0^1 U_C^*(a)\,\dd F(a),
\]
and is attained by the optimal rule with full commitment.
\end{corollary}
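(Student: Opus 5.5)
The argument has two halves: an upper bound valid for every admissible pair, and attainment by the optimal rule under full commitment. Both rest on Proposition~\ref{prop:supp-full-normalization} and the fixed-action weak-duality certificate in \eqref{eq:supp-dual}.

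For the upper bound, fix a pointwise-feasible measurable local rule $(x,r)$ and a common, independently randomized, capacity-feasible policy $\sigma$ against which entrant non-entry is a best response. Equilibrium of $\sigma$ is not assumed. Apply Proposition~\ref{prop:supp-full-normalization}, whose first two claims need only non-entry support and capacity feasibility, not equilibrium. This gives a pointwise-feasible rule $(\widehat x,\widehat r)$ with three properties:
\begin{enumerate}
\item under full capacity commitments it yields the same aggregate expected intended-user surplus as $\sigma$ under $(x,r)$;
\item its entrant payoff at every fixed action $a$ equals $d_G(\phi(a))\leq0$;
\item hence its designated-profile entrant utility satisfies $\widehat U_D(a)\leq0$ for every $a\in[0,1]$.
\end{enumerate}
Property~2 holds because the effective rival state under full commitment is $\max_j\phi(\Theta_j)$, which is distributed as $G^K$.

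Next, bound the surplus of $(\widehat x,\widehat r)$ actionwise. The surplus is $K\int_0^1\widehat U_C(a)\dd F(a)$, where $\widehat U_C$ is its designated-profile intended-user utility. For each $a>0$, take $\lambda=(K-1)/q_R(a)$ in \eqref{eq:supp-dual}. Pointwise maximization, with only pointwise feasibility imposed, gives
\[
K\widehat U_C(a)\leq K\widehat U_C(a)-\lambda\widehat U_D(a)\leq KU_C^*(a).
\]
This step uses no intended-user monotonicity or IR, which is essential because $\widehat U_C$ need not be monotone when $\sigma$ is not an equilibrium. At $a=0$ rank feasibility forces $\widehat x(0,m)=0$, so $\widehat U_C(0)\leq0=U_C^*(0)$. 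Integrating against $\dd F$ bounds the original pair's surplus by $K\int_0^1U_C^*\dd F$.

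For attainment, the optimal rule \eqref{eq:optimal-retention}--\eqref{eq:allocation-rule} paired with the full-commitment policy is itself admissible. It is pointwise feasible. Full commitment $\sigma(\cdot\mid\kappa)=\delta_\kappa$ is a degenerate, hence common and independent, capacity-feasible kernel. Section~S.3 established $U_D\equiv0$, so non-entry is a best response. Its surplus equals the bound by construction, so the supremum equals $K\int_0^1 U_C^*\dd F$ and is attained.

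The main obstacle is confirming that surplus equality in Proposition~\ref{prop:supp-full-normalization} really transfers to the designated-profile objective $K\int\widehat U_C\dd F$ when $G$ has atoms and ties. The argument runs in three steps:
\begin{enumerate}
\item The treatment-vector distribution is reproduced exactly, with each intended user's effective commitment $\phi(\Theta_i)$ and rival $\max_{j\ne i}\phi(\Theta_j)$.
\item Each user's expected payoff under full commitment is therefore $\int\widehat U_C\dd F$. Here $\widehat U_C(a)$ integrates against $F^{K-1}$ in the original state variable, so ties in effective commitments are automatically zero-allocation states through rank feasibility of $(x,r)$.
\item Hence $K\int\widehat U_C\dd F$ equals the original aggregate surplus.
\end{enumerate}
I would verify this identity explicitly via a change of variables before invoking the pointwise certificate.
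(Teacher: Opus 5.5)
Your proposal is correct and follows the paper's proof. You apply the full-commitment reduction, using only its first two claims, which need no equilibrium hypothesis. You then invoke the fixed-action weak-duality certificate \eqref{eq:supp-dual} pointwise without monotonicity or IR, integrate, and observe that the candidate rule with full commitment is admissible and attains the bound. Your explicit treatment of $a=0$ and your check that $\widehat U_C(a)=u_G(\phi(a))$ integrates to the original surplus fill in details the paper leaves implicit, but they do not change the route.
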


\begin{proof}
Apply Proposition~\ref{prop:supp-full-normalization} to any pair in the
enlarged comparison. Under the transformed rule, let
$\widehat U_C(a)=u_G(\phi(a))$ denote intended-user action utility against full
capacity commitments. The transformation preserves expected surplus, so the
original pair yields $K\int_0^1\widehat U_C(a)\dd F(a)$. It also preserves
pointwise feasibility and gives transformed entrant payoff
$d_G(\phi(a))\leq0$ for every $a$. The fixed-action certificate
\eqref{eq:supp-dual} therefore implies
$\widehat U_C(a)\leq U_C^*(a)$ pointwise. This argument does not require
$\widehat U_C$ to be nondecreasing or nonnegative. Integration gives the upper
bound, and the rule characterized above attains it while making full
commitment the unique best response of every positive type.
\end{proof}

\section{Holdback Activation and the Three Regimes}

For own action $a$, current allocation is withheld exactly on
\[
 \mathcal R(a)=\{m:\bar q_R(a)\leq F(m)<F(a)\}.
\]
Under the intended-user state distribution, this set has conditional
probability
\begin{equation}
\rho^*(a)=\left[F(a)^{K-1}-\bar q_R(a)^{K-1}\right]_+.
\label{eq:supp-rho-a}
\end{equation}
Here $[z]_+\equiv\max\{z,0\}$.
Since unique-leader events are mutually exclusive across users, the ex ante
holdback probability is
\begin{equation}
 \rho^*=K\int_0^1\rho^*(a)\dd F(a).
 \label{eq:supp-rho}
\end{equation}

\begin{proposition}[General activation condition]
For $a>0$, define
\[
 \Psi_F(a)=F(a)^K\left[\frac{v_D}{a}
              +\left(\frac{v_D}{v_C}\right)^K\right].
\]
Then $\mathcal R(a)$ has positive probability iff $\Psi_F(a)>1$.
Consequently, $\rho^*>0$ iff the set $\{a:\Psi_F(a)>1\}$ has positive
$F$-measure.
\end{proposition}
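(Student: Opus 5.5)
The plan is to reduce the holdback event to a single comparison between $q_R(a)$ and the point $t\equiv(v_D/v_C)F(a)$, and then to decide that comparison through the strictly increasing entrant-payoff function $h_a$ from the fixed-action certificate. This avoids comparing the two entries of the maximum in \eqref{eq:supp-q} by hand.

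\emph{Step 1 (reduce to a cutoff inequality).} Fix $a>0$ and write $p=F(a)$. Since $F$ is strictly increasing with density $f>0$ almost everywhere, the measure $\dd F(m)^{K-1}$ gives positive mass to $\{m:\alpha\leq F(m)<\beta\}$ iff $\alpha<\beta$. This matches the formula $\rho^*(a)=[p^{K-1}-\bar q_R(a)^{K-1}]_+$ in \eqref{eq:supp-rho-a}. Hence $\mathcal R(a)$ has positive probability iff $\bar q_R(a)<p$.

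Because $p\leq1$, the cap at one in \eqref{eq:supp-qbar} cannot produce the strict inequality. So $\bar q_R(a)<p$ holds iff $(v_C/v_D)q_R(a)<p$, that is, iff $q_R(a)<t$ with $t=(v_D/v_C)p\in(0,1)$.

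\emph{Step 2 (use monotonicity of $h_a$).} The fixed-action certificate shows that $h_a$ in \eqref{eq:supp-h} is continuous and strictly increasing on $[0,1]$, with unique zero $q_R(a)$. Therefore $q_R(a)<t$ iff $h_a(t)>0$.

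At $q=t$ the minimum in \eqref{eq:supp-h} equals $p$, because $(v_C/v_D)t=p$. This gives
\[
 h_a(t)=v_Dp^K-a\bigl(1-t^K\bigr).
\]
Dividing by $a>0$, the condition $h_a(t)>0$ becomes $(v_D/a)p^K+(v_D/v_C)^Kp^K>1$, which is exactly $\Psi_F(a)>1$. This proves the actionwise equivalence.

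\emph{Step 3 (ex ante statement).} By \eqref{eq:supp-rho}, $\rho^*=K\int_0^1\rho^*(a)\dd F(a)$ with $\rho^*(a)\geq0$. The integral of a nonnegative function is positive iff the function is positive on a set of positive measure. By Steps 1--2, $\{a>0:\rho^*(a)>0\}=\{a>0:\Psi_F(a)>1\}$.

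This set is Borel, since $\Psi_F$ is continuous on $(0,1]$. The point $a=0$ is $F$-null, so excluding it changes nothing. Hence $\rho^*>0$ iff $\{a:\Psi_F(a)>1\}$ has positive $F$-measure.

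The only delicate point is Step 1. There I must make sure that the cap at one in $\bar q_R$ and the boundary case $p=1$ do not break the equivalence, and that the intended-user measure charges every interval of positive $F$-length, including the case $K=2$, where that measure is simply $\dd F$. Step 2 is the same argument as the branch analysis already in the paper, but it avoids the casework because the monotonicity of $h_a$ has already been established.
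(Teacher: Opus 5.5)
Your proof is correct, and its outer structure matches the paper's. Step 1 is the same reduction of $\bar q_R(a)<F(a)$ to $q_R(a)<(v_D/v_C)F(a)$ via $F(a)\leq 1$, and Step 3 is the same integration of $\rho^*(a)$. You depart from the paper in how you decide the key inequality.

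The paper works from the closed form \eqref{eq:supp-q}. It compares the two entries of the maximum and notes that the second strictly exceeds the first exactly when it lies below $[(v_D/v_C)F(a)]^K$. It concludes that holdback selects the branch $q_R(a)^K=a/(a+B_K)$, and then rearranges $a/(a+B_K)<(v_D/v_C)^KF(a)^K$.

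You instead use the characterization of $q_R(a)$ as the unique zero of the strictly increasing $h_a$, and evaluate $h_a$ at the kink $t$. The two branch functions $v_Dp^K-a(1-q^K)$ and $(a+B_K)q^K-a$, whose zeros are the $K$-th roots of the two entries of the maximum, coincide at $q=t$. So no branch selection is needed. Your one-line evaluation is in effect the algebra that the paper compresses into ``direct comparison.''

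What your route costs is a dependence on the earlier certificate analysis, namely that the maximum in \eqref{eq:supp-q} is exactly the zero of $h_a$. The paper's argument needs only the closed form. What the paper's route gives for free is the identification $q_R(a)^K=a/(a+B_K)$ whenever holdback is active. You can recover this at once: $q_R(a)<t$ puts the zero of $h_a$ on the branch where the minimum equals $(v_C/v_D)q$.

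The points you flagged in Step 1 are harmless. The cap at one cannot make $\bar q_R(a)<p\leq 1$. Continuity of $F$ alone makes $F(M_i)$ have CDF $u^{K-1}$, so the band has mass $p^{K-1}-\bar q_R(a)^{K-1}$ for every $K\geq 2$. Finally, $t>0$ because $F(a)>F(0)\geq 0$.
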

\begin{proof}
The holdback band is nonempty iff $\bar q_R(a)<F(a)$. Because $F(a)\leq1$,
this is equivalent to $q_R(a)<(v_D/v_C)F(a)$. Direct comparison of the two
entries in \eqref{eq:supp-q} shows that the second strictly exceeds the first
exactly when it lies below $\bigl[(v_D/v_C)F(a)\bigr]^K$. Hence the strict
holdback inequality selects the second branch. Substituting
$q_R(a)^K=a/(a+B_K)$ gives
\begin{align*}
 \frac{a}{a+B_K}
 &<\left(\frac{v_D}{v_C}\right)^K F(a)^K\\
 \Longleftrightarrow\quad
 a
 &<a\left(\frac{v_D}{v_C}\right)^KF(a)^K+v_DF(a)^K\\
 \Longleftrightarrow\quad
 1
 &<F(a)^K\left[\frac{v_D}{a}
      +\left(\frac{v_D}{v_C}\right)^K\right].
\end{align*}
The ex ante statement follows from \eqref{eq:supp-rho}.
\end{proof}

\begin{corollary}[Uniform capacities]
If $F(a)=a$, let $\widehat v^R_{D,K}(v_C)$ be the unique root in $(0,1)$ of
\[
 v+\left(\frac{v}{v_C}\right)^K=1.
\]
Then the optimal holdback probability is zero for
$v_D\leq\widehat v^R_{D,K}(v_C)$ and positive above it.
\end{corollary}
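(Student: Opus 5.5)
The plan is to specialize the general activation proposition just proved to $F(a)=a$ and then exploit monotonicity in $a$. With uniform capacities,
\[
 \Psi_F(a)=a^K\Bigl[\frac{v_D}{a}+\Bigl(\frac{v_D}{v_C}\Bigr)^K\Bigr]
 =v_Da^{K-1}+\Bigl(\frac{v_D}{v_C}\Bigr)^Ka^K .
\]
Since $K\geq2$, both terms are strictly increasing in $a$ on $(0,1]$. So the set $\{a:\Psi_F(a)>1\}$ is an interval of the form $(a^\ast,1]$, or it is empty. Its supremum value is $\Psi_F(1)=v_D+(v_D/v_C)^K$.

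The key steps, in order, are as follows.

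First, by continuity and strict monotonicity, $\{a\in(0,1]:\Psi_F(a)>1\}$ is nonempty iff $\Psi_F(1)>1$. When it is nonempty it contains a nondegenerate interval ending at one, so it has positive Lebesgue, hence $F$-, measure. When $\Psi_F(1)\leq1$, we have $\Psi_F(a)<\Psi_F(1)\leq1$ for all $a<1$, so the set is empty, or at most $\{1\}$, which is null.

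Second, by the activation proposition, $\rho^\ast>0$ iff $v_D+(v_D/v_C)^K>1$.

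Third, the function $\psi(v)=v+(v/v_C)^K$ is continuous and strictly increasing on $[0,\infty)$, with $\psi(0)=0<1$ and $\psi(1)=1+v_C^{-K}>1$. Hence it has a unique root $\widehat v^R_{D,K}(v_C)$ in $(0,1)$. Moreover, $\psi(v_D)>1$ iff $v_D>\widehat v^R_{D,K}(v_C)$, which gives both claimed regimes.

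No step is a real obstacle. The only point needing care is the boundary case $v_D=\widehat v^R_{D,K}$, where $\Psi_F(1)=1$ exactly and strict inequality fails everywhere, so $\rho^\ast=0$; this matches the weak inequality in the statement. I would also note that $v_D<v_C$ is not needed here: the root lies in $(0,1)$ regardless. Values $v_D$ above one are allowed by the model and fall in the positive-holdback regime, consistent with the necessity result.
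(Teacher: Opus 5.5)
Your proposal is correct and follows essentially the paper's own argument. You specialize $\Psi_F$ to $v_Da^{K-1}+(v_D/v_C)^Ka^K$ and use strict monotonicity in $a$ to reduce positive-measure activation to $\Psi_F(1)=v_D+(v_D/v_C)^K>1$; you then locate the unique root of $v+(v/v_C)^K=1$ in $(0,1)$ from strict monotonicity and the values at $v=0$ and $v=1$. Your explicit check of the boundary case $v_D=\widehat v^R_{D,K}$, where $\Psi_F(1)=1$ and so the activation set is empty, is a useful addition that the paper's terser proof leaves implicit.
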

\begin{proof}
Now
\[
 \Psi_F(a)=v_Da^{K-1}+\left(\frac{v_D}{v_C}\right)^Ka^K,
\]
which is strictly increasing in $a$. A positive-measure holdback set exists iff
$\Psi_F(1)>1$. The threshold equation has a unique root because its left side
is strictly increasing, equals zero at $v=0$, and exceeds one at $v=1$.
\end{proof}

For $K=2$ and $F(a)=a$, the threshold divides three economically distinct
regions. If $v_D$ is below the threshold, the class-optimal rule itself always
allocates to the unique leader. If the threshold is crossed while $v_D\leq1$, the optimum
uses positive holdback even though a holdback-free implementation exists. To
see feasibility, allocate whenever the focal action is strictly highest and retain the full
commitment in every state. An intended-user action $a$ then obtains
$(v_C-1)a$, which is increasing and nonnegative, while the entrant obtains
$v_Da^2-a\leq0$. Strict optimality of positive holdback follows from the
fixed-action dual certificate: on the holdback band, the allocation coefficient
is strictly negative, so a rule that allocates there cannot attain the dual
upper bound. Finally, if $v_D>1$, always allocating to the unique leader cannot support non-entry
because retained burden is capped at one.

At $v_C=3/2$, the two-user threshold solves
$v+(2v/3)^2=1$ and equals exactly $3/4$. For $K=3,5,10$, the corresponding
thresholds are $0.83036067$, $0.91536964$, and $0.98507900$.

\section{Holdback Probability and One-Dimensional Evaluation}

Let $u=F(a)$, $a(u)=F^{-1}(u)$, $q(u)=q_R(a(u))$, and
$b(u)=\min\{u,\bar q_R(a(u))\}$. Conditional on own quantile $u$, a constrained
account is allocated exactly when the maximum of the other $K-1$ independent
uniform quantiles is below $b(u)$. Its allocation probability is therefore
$b(u)^{K-1}$. It is retained exactly when that maximum is at least $q(u)$, so
its retention probability is $1-q(u)^{K-1}$. Summing symmetrically gives
\begin{align*}
 P_{\mathrm{alloc}}&=K\int_0^1b(u)^{K-1}\dd u,\\
 E_{\mathrm{ret}}&=K\int_0^1a(u)[1-q(u)^{K-1}]\dd u,\\
 \PS&=v_CP_{\mathrm{alloc}}-E_{\mathrm{ret}},\\
 \rho^*&=1-P_{\mathrm{alloc}}.
\end{align*}
Thus intended-user surplus, retained burden, and holdback are evaluated by
one-dimensional integration for every population size. The reported
high-precision numerical values use adaptive quadrature with absolute and
relative tolerances $2\times10^{-11}$.

\section{Best Leader-Always-Allocate Comparator}

The main within-class comparison holds fixed the local information, IC, IR,
retained-burden cap, and support of entrant non-entry, imposing only the
additional restriction that every unique leader be allocated. It is therefore
the restricted optimum obtained by prohibiting holdback, not an arbitrary
baseline. For $K=2$, $F(a)=a$, and $v_D\leq1$, impose
\[
 x(a,m)=\one\{m<a\}.
\]
At a fixed action $a$, the entrant obtains expected allocation value $v_Da^2$.
Upper-tail retention minimizes the intended user's expected burden subject to
offsetting that amount. The cutoff $q_0(a)$ therefore satisfies
\[
 a[1-q_0(a)^2]=v_Da^2,
 \qquad q_0(a)=\sqrt{1-v_Da},
\]
and the largest fixed-action utility consistent with the entrant constraint is
\begin{equation}
 g(a)=(v_C-1)a+a\sqrt{1-v_Da}.
 \label{eq:certain-ceiling}
\end{equation}

Global IC requires a nondecreasing action-utility schedule. Consequently, the
largest feasible schedule below the pointwise ceiling is its greatest
nondecreasing minorant,
\begin{equation}
 \underline g(a)=\inf_{t\in[a,1]}g(t).
 \label{eq:monotone-minorant}
\end{equation}
This schedule is attainable. At each action $a$, the upper-tail construction
attains $g(a)$, while retention in every state gives utility $(v_C-1)a$.
Mixing the two retention schedules therefore attains every utility in
$[(v_C-1)a,g(a)]$ without weakening the entrant constraint. Moreover, for
every $t\geq a$,
\[
 g(t)\geq(v_C-1)t\geq(v_C-1)a,
\]
and the infimum in \eqref{eq:monotone-minorant} includes $t=a$. Hence
$(v_C-1)a\leq\underline g(a)\leq g(a)$, so the required minorant lies in the
attainable interval. For the illustrative values below,
$g$ is concave, rises past $g(1)$, and then returns to it. Thus
$\underline g(a)=g(a)$ up to the first root $a_0$ of $g(a_0)=g(1)$ and equals
$g(1)$ thereafter.

At $(v_C,v_D)=(1.5,0.95)$,
\[
 a_0=0.644567078007,
 \qquad
 \PS_{\mathrm{LA}}
 =2\left[\int_0^{a_0}g(a)\dd a+(1-a_0)g(1)\right]
 =1.039422830098.
\]
Because this comparator always allocates, its expected retained burden is
$v_C-\PS_{\mathrm{LA}}=0.460577169902014$; the same $v_C$ is the burden-free
first-best intended-user-surplus value.
The class optimum has $\PS^*=1.088117457369$, so allowing holdback raises
intended-user surplus by $4.684775614\%$ and closes $10.572523011\%$ of the
remaining first-best intended-user-surplus gap. A grid implementation replaces
\eqref{eq:monotone-minorant} by suffix minima. At 800 action nodes it gives
$1.039422381$, within $4.50\times10^{-7}$ of the exact value.

For any rule $R$ at designated non-entry, let $A_R$ denote the intended-user
allocation probability and $B_R$ the aggregate expected retained burden. Then
\[
 \PS_R=v_CA_R-B_R.
\]
Because $A_{\mathrm{LA}}=1$ and $A^*=1-\rho^*$, the holdback-specific gain
has the implementation-invariant decomposition
\[
 \PS^*-\PS_{\mathrm{LA}}
 =
 (B_{\mathrm{LA}}-B^*)-v_C\rho^*.
\]
For the two values displayed in Panel (b), the components are
\begin{center}
\scriptsize
\begin{tabular}{rrrr}
\toprule
$v_D$ & $B_{\mathrm{LA}}-B^*$ & $v_C\rho^*$ &
$\PS^*-\PS_{\mathrm{LA}}$\\
\midrule
$0.90$ & $0.040020202$ & $0.026888464$ & $0.013131738$\\
$0.95$ & $0.092006583$ & $0.043311955$ & $0.048694627$\\
\bottomrule
\end{tabular}
\end{center}
The corresponding gains relative to the leader-always-allocate comparator are
$1.191762858\%$ and $4.684775614\%$. Panel (b) plots the two signed
contributions as side-by-side waterfalls: each right endpoint is the net
effect reported in the caption. A finer winner-versus-loser decomposition is
not invariant
because multiple retention schedules can implement the same utility minorant.

This comparator weakly supports non-entry. Separately, because $v_D<1$, the
simple leader-always-allocate rule that retains the full commitment in every state
gives entrant payoff $v_Da^2-a<0$ for every $a>0$. Hence positive holdback is
not required for the designated non-entry outcome at either displayed
illustrative point.

\section{Extensions and Implementation Scope}

\begin{proposition}[Design for an upper entrant-value bound]
Suppose the single entrant's value is unknown to the designer but belongs to
$[0,\bar v_D]$, where $0<\bar v_D<v_C$. The rule constructed at $\bar v_D$
maximizes intended-user surplus in the enlarged joint-design comparison
subject to non-entry being a best response for every value in this interval.
For every value strictly below $\bar v_D$, every positive commitment is
strictly worse than non-entry.
\end{proposition}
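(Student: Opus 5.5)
The plan has three steps. The first is a monotone comparative static: for each fixed action $a>0$, the optimal rule's entrant payoff is weakly decreasing in the entrant's value. The second is to deduce the robust optimality claim from Corollary~\ref{cor:supp-policy-optimality}. The third is to obtain strictness from the fact that every positive action is allocated with positive probability.

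Let $(x^*,r^*)$ be the rule constructed at $\bar v_D$. For an entrant of value $v\le\bar v_D$ and action $a$, its payoff against the designated profile is
\[
U_D^{v}(a)=\int_0^1\left[v\,x^*(a,m)-r^*(a,m)\right]\dd F(m)^K=U_D^{\bar v_D}(a)-(\bar v_D-v)A_D(a),
\]
where $A_D(a)=\int x^*(a,m)\dd F(m)^K$. By the fixed-action certificate in the complete proof of the optimal local rule, $U_D^{\bar v_D}(a)=0$ for every $a$. Hence $U_D^{v}(a)=-(\bar v_D-v)A_D(a)\le0$, so non-entry is a best response for every $v$ in $[0,\bar v_D]$. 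The rule is therefore feasible for the robust problem. The value $v=0$ needs no separate treatment because the same formula covers it.

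For optimality, note that any pair of a rule and a policy satisfying the robust constraint satisfies, in particular, non-entry support at $v=\bar v_D$. It is therefore feasible in the enlarged comparison of Corollary~\ref{cor:supp-policy-optimality} with entrant value $\bar v_D$. That corollary bounds its surplus by $K\int_0^1U_C^*(a)\dd F(a)$ computed at $\bar v_D$, and the candidate attains this bound. The robust feasible set is a subset of the single-value feasible set at $\bar v_D$ and contains the candidate, so the candidate is optimal. One point to check is that Proposition~\ref{prop:supp-full-normalization} transforms the rule without reference to $v_D$. It does, since $\phi$ depends only on $G$ and $F$, but the argument only needs the single constraint at $\bar v_D$ anyway.

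Strictness requires $A_D(a)>0$ for every $a>0$. The allocation set is $\{m:F(m)<\min(F(a),\bar q_R(a))\}$. We have $F(a)>0$ because $F$ is strictly increasing, and $\bar q_R(a)\ge(v_C/\bar v_D)q_R(a)\wedge1>0$ because $q_R(a)^K\ge a/(a+B_K)>0$. The threshold $b=\min(F(a),\bar q_R(a))$ is therefore positive, and $A_D(a)=b^K>0$ because $F(M)$ has a continuous distribution under the maximum of $K$ capacities. Thus $U_D^{v}(a)<0$ for every $v<\bar v_D$ and every $a>0$. The action $a=0$ gives $x^*=r^*=0$ and payoff zero, which ties with non-entry. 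The statement's restriction to positive commitments is exactly what this argument delivers.

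I expect the main obstacle to be making this positivity airtight, in particular ensuring that $q_R(a)$ stays bounded away from zero so the allocation region is nondegenerate for every $a>0$. The step itself is short: it uses the explicit positive second entry of the maximum.
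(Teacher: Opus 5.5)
Your proposal is correct and follows essentially the same route as the paper. The surplus bound comes from the single-value comparison at $\bar v_D$: the paper phrases this as equivalence of the robust constraints with those at $\bar v_D$, by monotonicity of $vA(a)-R(a)$ in $v$, while you use the simpler subset inclusion plus feasibility of the candidate. Strictness comes from $U_v(a)=-(\bar v_D-v)A(a)$ with $A(a)>0$, and your explicit computation $A_D(a)=\min\{F(a),\bar q_R(a)\}^K>0$ supplies the positivity that the paper only asserts.
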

\begin{proof}
For a fixed rule and action, write $A(a)$ for allocation probability and
$R(a)$ for expected retained burden. Entrant payoff $vA(a)-R(a)$ is affine and
weakly increasing in $v$, so the upper-bound value gives the tightest
actionwise constraints. Requiring non-entry for every value in the interval is
therefore equivalent to imposing the constraints only at $\bar v_D$. The
intended-user-surplus objective at the designated non-entry profile does not
otherwise depend on entrant value, so the main optimality argument applies to
this robust comparison. For the constructed rule,
$U_{\bar v_D}(a)=0$ at every action and $A(a)>0$ for $a>0$. Hence
$v<\bar v_D$ gives $U_v(a)=-(\bar v_D-v)A(a)<0$ for every positive action.
At $a=0$, entry remains outcome-equivalent to non-entry.
\end{proof}

\begin{corollary}[Finite false-name portfolios]
Fix the designated intended-user profile. Suppose one economic entrant controls any
finite number of identities, values at most one delivered object, and sums all
burdens borne by its identities. Under the class-optimal rule, every such portfolio
is weakly dominated by submitting a single account at its largest commitment.
\end{corollary}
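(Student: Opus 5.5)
Fix the designated profile, so the $K$ intended users submit i.i.d.\ commitments $\kappa_1,\dots,\kappa_K\sim F$. Consider a finite portfolio of entrant identities with commitments $c_1\ge c_2\ge\cdots\ge c_n$, and set $c=c_1$. The target inequality is that the portfolio's expected payoff is at most the expected payoff $U_D(c)$ of a single account at $c$. Theorem~\ref{thm:deposit} gives $U_D(c)=0$, and for $c=0$ the single account is outcome-equivalent to non-entry. The comparison will be made state by state: condition on the realized intended-user maximum $M=\max_j\kappa_j$ and bound the portfolio's realized payoff by the single account's realized payoff $v_Dx^*(c,M)-r^*(c,M)$.

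First, identify which identities can earn anything. Only an account with a strictly highest commitment can be allocated, so every identity other than the top one gets $x=0$. Its contribution is therefore $-r^*\le 0$, and dropping it weakly raises the portfolio payoff. If several identities tie at $c$, all have $x^*=0$ by rank infeasibility. All of them then bear retention $r^*(c,\cdot)\ge0$, so collapsing them to one account weakly helps; this is the tie-resolution step. Because the entrant values at most one object, the portfolio's value is at most $v_D$ times the top identity's allocation indicator.

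Second, compare the top identity's payoff with and without its own siblings. Within the portfolio, the top account faces strongest rival $m'=\max\{M,c_2\}\ge M$; as a lone account it faces $M$. The key monotonicity claim is that the state payoff $\pi(m)=v_Dx^*(c,m)-r^*(c,m)$ is weakly decreasing in $m$. This follows from the cutoff structure \eqref{eq:optimal-retention}--\eqref{eq:allocation-rule}: $x^*(c,m)=\one\{F(m)<\min(F(c),\bar q_R(c))\}$ is nonincreasing in $m$, and $r^*(c,m)=c\,\one\{F(m)\ge q_R(c)\}$ is nondecreasing in $m$. Hence $\pi(m')\le\pi(M)$ pointwise. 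Summing the two steps bounds the realized portfolio payoff by $\pi(M)$. Taking expectations over $M\sim F^K$ gives a portfolio payoff of at most $U_D(c)=0$, which proves weak dominance, and non-entry remains a best response.

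I expect the main obstacle to be bookkeeping the tie cases rather than any real difficulty. When the top commitment is tied, either among the entrant's own identities or with an intended user (a probability-zero event because $F$ is continuous), one must check that no identity receives allocation and that each tied identity bears nonnegative retention. The single account's treatment must also be evaluated at the right rival statistic. I would handle this by noting that $x^*\le\one\{m<a\}$ forces zero allocation at any tie. Dropping duplicates only lowers the top account's $m'$ toward $M$, and the monotonicity of $\pi$ then makes every such modification a weak improvement.
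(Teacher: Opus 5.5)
Your proposal is correct and follows essentially the same route as the paper. You condition on the intended-user maximum $M$, note that every non-top identity earns $-r^*\le 0$, show that the top identity's state payoff $v_Dx^*(c,m)-r^*(c,m)$ is weakly decreasing in $m$ (so facing $\max\{M,c_2\}$ instead of $M$ can only hurt), and integrate using $U_D(c)=0$.
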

\begin{proof}
Fix a realization $M$ of the highest intended-user commitment and a finite
controlled portfolio. Let $a$ be its highest controlled commitment and let $b_{(2)}$ be
its second-highest, with $b_{(2)}=0$ for a singleton portfolio. The highest
controlled account faces strongest rival $\max\{M,b_{(2)}\}$, whereas a single
account with commitment $a$ faces $M$.

Under the cutoff rule following \eqref{eq:supp-qbar}, the highest account's
state payoff is weakly decreasing in opposing congestion. Depending on the
order of the allocation and retention cutoffs, its treatment moves from
refund-and-allocate either through retain-and-allocate or through
refund-and-withhold, and then to retain-and-withhold. Removing lower controlled
commitments therefore weakly raises the highest account's payoff. Every
removed identity obtains no additional single-unit allocation value and bears
a nonnegative retained burden, so its own contribution to portfolio payoff is
nonpositive.

If several controlled identities tie at the highest commitment while no
intended-user account ties them, the unique-leader convention gives them no
current-round allocation. Deleting all but one weakly increases the controller's
allocation opportunity and removes nonnegative burdens. A tie with the highest
intended-user commitment is a null event under the continuous capacity distribution. The comparison therefore holds almost surely; integrating over
$M$ and applying the single-account identity $U_D(a)=0$ from
the fixed-action certificate shows that every finite portfolio yields expected
payoff at most zero. Since non-entry yields zero, it remains a best response.
\end{proof}

The corollary is not dominant-strategy false-name-proofness for arbitrary
opponent profiles. It relies on unilateral splitting by one economic entrant,
the designated intended-user profile, single-unit demand, finitely many
identities, additive account burdens, and the maintained continuous capacity
distribution.

\subsection{Market-size uncertainty and count-based detection}

The count-dependent first-best construction uses the known number of intended
users as an off-path entry signal.

\paragraph{Known-market-size construction.}
When exactly $K$ commitments are positive, refund every commitment and
allocate to the unique leader. For any other positive-account count, allocate
nothing and retain every positive commitment in full. At the designated
profile, exactly $K$ commitments are positive almost surely, so the rule
attains first-best intended-user surplus with no retained burden or holdback.
An intended-user action $a$ obtains utility $v_CF(a)^{K-1}$, which is strictly
increasing. Any entrant action $a>0$ instead creates $K+1$ positive
commitments almost surely and earns $-a<0$; a zero commitment remains
payoff-equivalent to non-entry. This proves the count-dependent first-best
implementation.

The following result formalizes a market-size-uncertainty boundary for that
construction.

\begin{proposition}[Failure of first-best count detection under uncertainty]
Let the number of intended users be
$\widetilde K\in\{K,K+1\}$, independent of their capacities and not directly
observed by the mechanism, with
$q=\Prb(\widetilde K=K)\in(0,1)$. The rule observes all submitted commitments.
Suppose $F$ has no atom at zero and $qv_D>1-q$. No anonymous full-profile rule
that allocates at most one unit, allocates only to an account whose commitment
is strictly highest,
and retains at most each account's own commitment can both attain first-best
intended-user surplus in expectation over $\widetilde K$ at the designated
profile and weakly support entrant non-entry against every fixed commitment.
\end{proposition}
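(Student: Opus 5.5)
The plan is to show directly that first-best surplus pins down the rule almost everywhere on the $(K+1)$-account profile space. Anonymity then lets a fixed-commitment entrant, when $\widetilde K=K$, inherit the first-best treatment intended for a $(K+1)$-user market. Because intended users share the common value $v_C$ and burdens count as losses, first-best intended-user surplus in expectation over $\widetilde K$ equals $v_C$. Attaining it therefore requires, for each realization $n\in\{K,K+1\}$ separately (both have positive probability since $q\in(0,1)$), that under the designated profile with $n$ i.i.d.\ capacity draws:
\begin{itemize}
\item the unique strict leader is allocated with probability one, and
\item every account's retained burden is zero almost surely.
\end{itemize}
Strict leadership is almost sure because $F$ is continuous. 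I will record only the $n=K+1$ consequence. Let $E\subset[0,1]^{K+1}$ be the set of profiles on which the rule allocates to the unique leader and retains nothing from any account. Then $E$ has full $F^{\otimes(K+1)}$-measure.

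Next I transfer this to a fixed entrant action. By anonymity, treatment depends on the profile only up to permutation, so I may put the entrant's account in the first coordinate. By Fubini, for $F$-almost every $a$ the section $E_a=\{\boldsymbol\theta\in[0,1]^K:(a,\boldsymbol\theta)\in E\}$ has full $F^{\otimes K}$-measure. Call such $a$ good. Since $F$ is strictly increasing, every interval $(1-\delta,1]$ has positive $F$-measure and so contains good actions. Hence there are good $a_n\uparrow1$. This measure-zero bookkeeping is the step I expect to be the main obstacle. First best constrains the rule only almost surely, while an entrant deviation is a single fixed point $a$ that could land in the exceptional null set. The Fubini-plus-anonymity argument, together with strict monotonicity of $F$ near one, is what I would use to get around this. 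Continuity of $F$ also makes ties between $a$ and intended users null. The no-atom-at-zero assumption ensures zero commitments do not create a positive-probability exceptional configuration.

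Now I evaluate the entrant's payoff from a good action $a$, splitting on $\widetilde K$.
\begin{itemize}
\item If $\widetilde K=K$, with probability $q$, the submitted profile is $(a,\boldsymbol\theta)$ with $\boldsymbol\theta\sim F^{\otimes K}$, which lies in $E_a$ almost surely. The entrant therefore pays nothing and wins exactly when it is the strict leader. Its conditional payoff is $v_DF(a)^K$.
\item If $\widetilde K=K+1$, allocation value is nonnegative and retention is at most $a$ by the burden cap. Its conditional payoff is at least $-a$.
\end{itemize}
Combining the two cases,
\[
U_D(a)\ \geq\ q\,v_D F(a)^K-(1-q)\,a .
\]
Along the good sequence $a_n\uparrow1$, continuity of $F$ gives $F(a_n)\to1$. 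The right side therefore converges to $qv_D-(1-q)>0$. For large $n$ the entry action $a_n$ strictly beats non-entry, which contradicts weak support of non-entry against every fixed commitment. This proves the proposition.
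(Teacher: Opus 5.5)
Your proposal is correct and follows essentially the same route as the paper's proof. First best forces refunds and allocation to the unique leader almost surely on $(K+1)$-account profiles. Anonymity plus Fubini transfers this to $F$-almost every fixed entrant commitment, and choosing such commitments near one gives the lower bound $qv_DF(a)^K-(1-q)a\to qv_D-(1-q)>0$.
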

\begin{proof}
For either realized market size, intended-user surplus is bounded above by
$v_C$: at most one intended user receives the unit, and retained burdens are
nonnegative. Because both market sizes have positive probability, attaining
this upper bound in expectation requires full refunds and allocation to the
unique leader almost surely in each state. In particular, the rule gives that
generous treatment for $F^{\otimes(K+1)}$-almost every intended-user profile with
$K+1$ positive commitments.

Anonymity makes treatment depend on the submitted profile rather than on the
economic identity behind an account. Fubini's theorem therefore implies that,
for $F$-almost every fixed $a$, the same generous treatment applies
$F^{\otimes K}$-almost surely to the profile consisting of $K$ intended-user
commitments and the additional commitment $a$. Any neighborhood of one has
positive $F$-measure, so such admissible values of $a$ can be chosen
arbitrarily close to one.

When $\widetilde K=K$, an entrant using such an $a$ is refunded and wins
whenever all intended-user commitments are below $a$, an event of probability
$F(a)^K$. When $\widetilde K=K+1$, bounded retained burden gives the entrant
payoff at least $-a$, regardless of the allocation decision. Its expected
deviation payoff is consequently bounded below by
\[
 qv_DF(a)^K-(1-q)a.
\]
As $a$ approaches one through the admissible set, this lower bound converges to
$qv_D-(1-q)>0$. A sufficiently high admissible commitment is therefore a
strictly profitable deviation, contradicting weak non-entry support.
\end{proof}

This is a one-sided impossibility result. It does not imply that account-count
information is generally useless, nor does the reverse probability inequality
alone guarantee first-best implementation for an arbitrary capacity
distribution.

\subsection{Strategic roles and implementation boundaries}
The intended users' and entrant's incentive conclusions are asymmetric.
Against designated opponents' play, strict increase of $U_C^*$ makes full commitment
the unique best response of every intended-user type $\kappa>0$. By contrast,
the class-optimal rule has $U_D(a)=0$ for every $a>0$, while abstention also
yields zero. It therefore supports a non-entry equilibrium but does not select
non-entry from the entrant's best-response set. This observation alone neither
constructs nor rules out other equilibria because intended-user responses
may change when the entrant participates.

The designer maximizes $K\int U_C(a)\dd F(a)$ at the designated profile. Entrant
allocation value, operator revenue, and transfer-neutral credit for retained
payments are excluded. The operator is not a strategic player and can commit
to the state-contingent rule. The analysis also excludes an entrant that waits
to observe realized congestion and then chooses $a(m)$; the fixed-action
constraints $U_D(a)\leq0$ would not control such reactive entry. Coordinated
entrants, infinite identities, multi-unit or nonadditive values, nonadditive
burdens, and identity creation that expands an intended user's aggregate
commitment capacity remain outside the result.

\section{Finite-LP Cross-Check}
Let $u_i=i/M$, $i=0,\ldots,M$, and let $a_i=F^{-1}(u_i)$. Congestion is divided
into $M$ quantile cells. For cell $j=1,\ldots,M$, define
\[
 \Delta^C_j=\left(\frac{j}{M}\right)^{K-1}
             -\left(\frac{j-1}{M}\right)^{K-1},
 \qquad
 \Delta^D_j=\left(\frac{j}{M}\right)^K
             -\left(\frac{j-1}{M}\right)^K.
\]
The variables $x_{ij}$ and $r_{ij}$ are current allocation and retained
burden for action $a_i$ in state cell $j$. Feasibility imposes
$0\le r_{ij}\le a_i$ and $x_{ij}=0$ whenever the state cell is not strictly
below $u_i$. With trapezoid weight $w_i$ on the own-type grid, the objective is
\[
 \max\ K\sum_i w_i\sum_j \Delta^C_j
                  \bigl(v_Cx_{ij}-r_{ij}\bigr).
\]
For each action $i$, the entrant constraint is
\[
 \sum_j\Delta^D_j\bigl(v_Dx_{ij}-r_{ij}\bigr)\le0.
\]
Adjacent utility monotonicity implements grid IC, and nonnegative utility at
each node imposes IR. The LP has exactly $2M(M+1)$ treatment variables and $3M+2$
structural inequalities. It is used only as a finite-grid consistency check of
the analytic theorem and does not restrict the solution to a
prespecified cutoff class.

\begin{table}[t]
\centering
\caption{Local state-contingent LP convergence for $K=2$, $F(a)=a$,
$(v_C,v_D)=(1.5,0.95)$. High-precision quadrature gives
$\PS^*=1.088117457$.}
\small
\begin{tabular}{rrrr}
\toprule
$M$ & Variables & LP $\PS$ & Absolute gap\\
\midrule
20  & 840    & 1.087086 & $1.03\times10^{-3}$\\
40  & 3,280  & 1.087871 & $2.47\times10^{-4}$\\
80  & 12,960 & 1.088057 & $6.00\times10^{-5}$\\
120 & 29,040 & 1.088090 & $2.72\times10^{-5}$\\
\bottomrule
\end{tabular}
\label{tab:unrestricted-lp}
\end{table}

The objective gaps shrink monotonically, while entrant-payoff residuals remain
at solver tolerance. The $M=80$ LP already reproduces
the analytic objective to four decimal places. This is consistent with the
continuous theorem and shows that the discretized result is not caused by
imposing the analytic threshold form.

\section{Selected Numerical Comparisons}

\subsection{High-precision evaluation at the illustrative point}
For $K=2$, uniform capacities, and $(v_C,v_D)=(1.5,0.95)$, one-dimensional
quadrature gives
\[
 \PS^*=1.088117457,\quad
 \rho^*=0.028874637,\quad
 E_{\mathrm{ret}}^*=0.368570587.
\]
The ratio of intended-user surplus to the burden-free first-best value,
$\PS^*/v_C$, is $0.725411638$. These high-precision evaluations of the
analytically characterized rule are the source of the rounded statistics at
the illustrative point in the main paper.

\subsection{Capacity-capped second-price accommodation benchmark}
\label{sec:capped-second-price}

The main design program asks a mechanism to support entrant non-entry. A
different institutional response is to accommodate every participant through
a standard auction. To compare the two responses within the same commitment
technology, interpret each commitment capacity as a hard upper bound on a
payable bid. For an account with bid $a$ and highest opposing bid $m$, define
\[
 x^{\mathrm{SP}}(a,m)=\one\{a>m\},
 \qquad
 r^{\mathrm{SP}}(a,m)=m\one\{a>m\}.
\]
Thus a unique highest bidder receives the unit and pays the highest opposing
bid; every losing bid is refunded. A top tie receives no allocation. Because
the capacity distribution is continuous, this convention has no effect at the
equilibrium profile below.

\begin{proposition}[Capacity-capped second-price accommodation]
\label{prop:capped-second-price}
Let a controller have single-unit value $v>0$. Suppose no identity it controls
can bid above a common cap $c\in[0,1]$, creating identities does not relax that
ceiling, and all payments by its identities are charged additively to the
controller. Under the capacity-capped second-price rule, the single bid
\[
 b^*(v,c)=\min\{v,c\}
\]
weakly dominates every feasible single bid and every finite portfolio of
controlled bids. Consequently, the profile
\[
 b_i^*=\kappa_i
 \quad\text{for every intended user},\qquad
 b_D^*=d:=\min\{v_D,1\}
\]
is a weakly dominant-strategy equilibrium and is robust to finite false-name
portfolios under these common-cap, additive-payment, and single-unit-demand
assumptions. The equilibrium need not be unique. Let $\kappa_{(K)}$ and
$\kappa_{(K-1)}$ denote the largest and second-largest intended-user
capacities. The entrant's allocation probability, aggregate intended-user surplus net of the
winner's payment, and modeled expected allocated value are, respectively,
\[
 \Prb(D\text{ wins})=F(d)^K,
\]
\[
 S_C^{\mathrm{SP}}
 =
 \E\!\left[
   \bigl(v_C-\max\{d,\kappa_{(K-1)}\}\bigr)
   \one\{\kappa_{(K)}>d\}
 \right],
\]
and
\[
 V_{\mathrm{SP}}
 =v_C\!\left[1-F(d)^K\right]+v_DF(d)^K
 =v_C-(v_C-v_D)F(d)^K .
\]
\end{proposition}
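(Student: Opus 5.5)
Three steps, in this order: dominance for a single bid, then for a finite portfolio, and then the equilibrium and the three displayed formulas.

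\emph{Single bids.} I will fix the controller's value $v$, its cap $c$, and the highest opposing bid $m$. A single bid $b\le c$ earns $(v-m)\one\{b>m\}$. Set $b^*=\min\{v,c\}$ and compare state by state.
\begin{itemize}
\item If $m<b^*$, then $b^*$ wins and earns $v-m>0$. Any $b$ earns either this amount or zero.
\item If $m\ge b^*$ and $m<v$, then $m\ge c$. No feasible bid exceeds $m$, so every bid earns zero; the tie $m=c$ also gives zero.
\item If $m\ge v$, then winning earns $v-m\le0$, and $b^*$ loses and earns zero.
\end{itemize}
So $b^*$ weakly dominates every feasible single bid. The argument uses no strictness, so ties against $m$ under the top-tie convention only ever give zero and cause no difficulty.

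\emph{Finite portfolios.} This is the step that needs care. Take a portfolio with highest controlled bid $a\le c$ and second-highest $b_{(2)}$, and let $M$ be the highest outside bid.
\begin{itemize}
\item If several controlled bids tie at the top, nobody wins and every payment is zero. The portfolio then earns zero, which is no more than what $b^*$ earns.
\item Otherwise, only the top controlled account can win. It pays $\max\{M,b_{(2)}\}\ge M$, and all other controlled accounts pay nothing.
\end{itemize}
In the second case the portfolio earns $(v-\max\{M,b_{(2)}\})\one\{a>\max\{M,b_{(2)}\}\}$. The single bid $a$ earns $(v-M)\one\{a>M\}$. I will show the portfolio's payoff is at most that of the single bid $a$, and then apply the single-bid step. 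The one sub-case to check is $a>M$ but $a\le b_{(2)}$, which is impossible because $a$ is the strict top. When $a>\max\{M,b_{(2)}\}$ the portfolio pays more than the single bid would. Additivity of payments and the assumption that identities do not relax the cap are exactly what this comparison needs.

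\emph{Equilibrium and formulas.}
\begin{itemize}
\item For an intended user, $v=v_C>1\ge\kappa_i=c$, so the dominant bid is $\kappa_i$. For the entrant, $c=1$, so its dominant bid is $\min\{v_D,1\}=d$. Hence the profile is a weakly dominant-strategy equilibrium, and non-uniqueness follows from standard examples of weak dominance.
\item The entrant wins iff $d>\kappa_{(K)}$, so $\Prb(D\text{ wins})=F(d)^K$, up to a null tie event.
\item When $\kappa_{(K)}>d$, the top intended user wins and pays the highest opposing bid $\max\{d,\kappa_{(K-1)}\}$. This gives $S_C^{\mathrm{SP}}$ by taking expectations.
\item Allocated value is $v_C$ on the event $\{\kappa_{(K)}>d\}$ and $v_D$ on its complement. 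This yields $V_{\mathrm{SP}}=v_C-(v_C-v_D)F(d)^K$.
\end{itemize}
Continuity of $F$ makes all ties null, as the statement notes.
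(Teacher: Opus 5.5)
Your proof is correct and is essentially the paper's: both fix the highest outside bid and compare payoffs state by state, and your detour through the single bid at the portfolio's top commitment rests on the same key fact the paper uses directly against $b^*$, namely that a winning controlled account pays at least the outside maximum and controlled ties earn zero. One small wording fix: in the tied-top case an outside bid above the tie can still win, so ``nobody wins'' should read ``no controlled account wins,'' but your zero-payoff conclusion there is unaffected.
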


\begin{proof}
Fix the highest bid $m$ submitted outside a controller's portfolio. The single
bid $b^*=\min\{v,c\}$ wins whenever $m<b^*$, pays $m$, and earns $v-m>0$. No
controlled portfolio can earn more in such a state: if one controlled account
wins, its payment is at least $m$, and single-unit demand supplies value only
once. If $m\geq b^*$ and $v\leq c$, any controlled bid that wins must induce a
payment of at least $m\geq v$ and hence earns at most zero. If $v>c$, no
feasible controlled bid can exceed $m$. Controlled ties also earn zero under
the tie convention. The single bid $b^*$ therefore weakly dominates every
finite portfolio state by state.

For an intended user, $v_C>1\geq\kappa_i$, so the capacity-constrained value bid is
$\kappa_i$. For the entrant it is $d=\min\{v_D,1\}$. Continuity makes the
highest bid unique almost surely. The entrant wins exactly when all intended
capacities lie below $d$, giving $F(d)^K$. If an intended user wins, the
payment is the larger of the entrant's bid and the second-largest intended
capacity, giving $S_C^{\mathrm{SP}}$. Weighting the two possible winner values
gives $V_{\mathrm{SP}}$.
\end{proof}

The entrant's expected utility at this profile is
\[
 S_D^{\mathrm{SP}}
 =
 \E\!\left[
   \bigl(v_D-\kappa_{(K)}\bigr)
   \one\{\kappa_{(K)}<d\}
 \right]>0 .
\]
Thus zero holdback here reflects accommodation, not successful screening.

At the same two-user uniform illustrative point, direct integration gives
\[
 S_C^{\mathrm{SP}}(d)
 =
 v_C(1-d^2)-\frac{1}{3}-d^2+\frac{4d^3}{3},
 \qquad 0\leq d\leq1.
\]
At $(v_C,v_D)=(1.5,0.95)$,
\[
 \Prb(D\text{ wins})=0.9025,\qquad
 V_{\mathrm{SP}}=1.003625,\qquad
 S_C^{\mathrm{SP}}=\frac{643}{12000}=0.053583\ldots .
\]
The entrant receives the unit most of the time because intended bids are
capacity-capped, even though $v_D<v_C$. At the designated non-entry outcome,
ignoring every retained monetary burden gives the following modeled expected
allocated value for the class-optimal
screening allocation:
\[
 v_CP_{\mathrm{alloc}}^*
 =\PS^*+E_{\mathrm{ret}}^*
 =1.456688 .
\]
This is the like-for-like comparison plotted in
Figure~\ref{fig:supp-accommodation}. For orientation,
the screening rule's distinct primary intended-user-surplus value, which
deducts retained burdens and excludes entrant value, is $1.088117$ and still
exceeds $V_{\mathrm{SP}}$. That cross-objective comparison deliberately gives
accommodation the more favorable accounting.

\begin{figure}[t]
\centering
\includegraphics[width=0.92\textwidth]{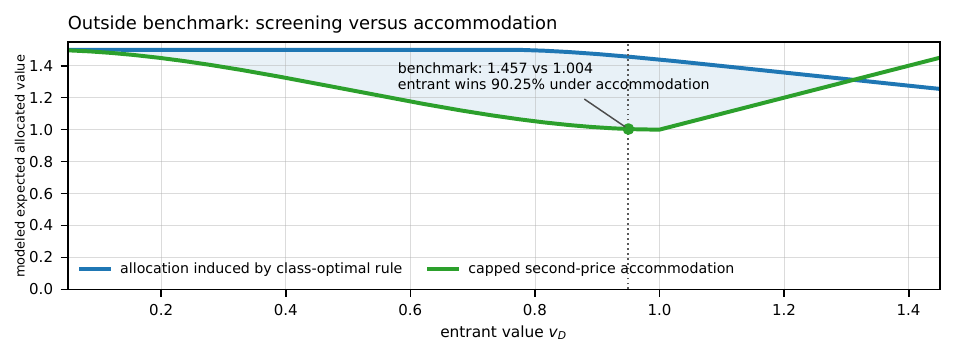}
\caption{Outside benchmark comparing modeled expected allocated value under
the allocation induced by the class-optimal screening rule at designated
non-entry with capacity-capped second-price accommodation. This comparison
changes both target behavior and the accounting convention.}
\label{fig:supp-accommodation}
\end{figure}

Four qualifications are essential. First, modeled expected allocated value
differs from the primary objective: it counts the entrant's value and ignores retained
monetary burdens. It is a model-internal accounting benchmark, not total
welfare. Second, the exclusion value
uses the designated non-entry outcome, whereas the auction value uses the
weakly dominant-strategy profile. Third, the proposition analyzes one natural
accommodation rule rather than optimizing over all rules that may serve the
entrant. Fourth, weak dominance supplies a standard equilibrium profile but
does not imply equilibrium uniqueness. Its finite-portfolio statement also
keeps the controller's bid ceiling fixed when identities are added; it does
not cover identities that manufacture a higher permissible ceiling.

\subsection{Holdback-specific middle-regime sensitivity}
For each $K$, let $\widehat v_{D,K}^R$ be the uniform-capacity activation
threshold at $v_C=1.5$, and set
\[
 v_D=\widehat v_{D,K}^R+\tau(1-\widehat v_{D,K}^R),
 \qquad \tau\in\{0.2,0.4,0.6,0.8\}.
\]
For the power family $F_\theta(a)=a^\theta$, every reported cell satisfies
$\theta K\geq1$, and
\[
 \Psi_{F_\theta}(a)
 =v_Da^{\theta K-1}
  +\left(\frac{v_D}{v_C}\right)^Ka^{\theta K}
\]
is increasing in $a$. Its maximum is therefore attained at $a=1$, so the
activation equation is the same
$v_D+(v_D/v_C)^K=1$ as under uniform capacities. Combining
$K\in\{2,3,5,10\}$ and $\theta\in\{0.5,1,2\}$ gives 48 cells strictly above
activation and below $v_D=1$.

The activation result and the strict pointwise dual loss on the positive
holdback band imply analytically that no holdback-free rule satisfying the
design constraints above attains the class optimum in these cells. This strict ordering does not rely on a
numerical grid. To describe its scale, we evaluate the analytically
characterized optimum by adaptive quadrature and compare it with a
$65{,}536$-node quantile-grid approximation to the monotone-envelope
leader-always-allocate comparator. The activation point is supplied explicitly
to the quadrature so that a thin upper-tail holdback interval is not skipped.
The computed relative differences are positive in every cell, with minimum
$0.000010687\%$, median $0.047907\%$, and maximum $5.571291\%$. These figures
are numerical diagnostics, not certified error bounds for the continuous
leader-always-allocate comparator; in particular, the smallest thick-market values
should be read as scale estimates rather than as the proof of strictness.
The fraction of the remaining first-best gap closed is
\[
 \frac{\PS^*-\PS_{\mathrm{LA}}}{v_C-\PS_{\mathrm{LA}}}.
\]

\begin{table}[t]
\centering
\caption{Grid diagnostic for the holdback-specific gain over the
leader-always-allocate comparator. The class optimum is evaluated by
high-precision quadrature and the comparator on a $65{,}536$-node grid;
percentages summarize 12 $(\theta,\tau)$ cells. Strict positivity follows from
the dual argument, not from this numerical table.}
\small
\begin{tabular}{rrrrr}
\toprule
$K$ & Min approx. gain & Median approx. gain & Max approx. gain & Median gap closed\\
\midrule
2  & 0.0186\% & 0.7991\% & 5.5713\% & 2.2060\%\\
3  & 0.0072\% & 0.2449\% & 2.4283\% & 0.4226\%\\
5  & 0.0011\% & 0.0356\% & 0.3820\% & 0.0406\%\\
10 & 0.000011\% & 0.000318\% & 0.003819\% & 0.000245\%\\
\bottomrule
\end{tabular}
\label{tab:certain-delivery-grid}
\end{table}

The main illustrative point is the $K=2$, $\theta=1$, $\tau=0.8$ cell. Its
leader-always-allocate comparator is also available from the analytic calculation in
the preceding section. It lies near the $v_D=1$ boundary and is chosen to make
the strict separation visible, not to claim that $4.68\%$ is representative
across market thicknesses.

\section{Reproducibility}
The tested reference environment is installed and the complete experiment regenerated
from the package root with
\begin{verbatim}
python -m pip install -r requirements-lock.txt
python code/test_mechanism.py
python code/test_numerics.py
python code/test_accommodation_benchmark.py
python code/verify_full_commitment_normalization.py \
  --nodes 6 \
  --out results/full_commitment_normalization_check.json
python code/experiments.py \
  --out-dir results --figures-dir figures \
  --rank-grid 300 --certain-grid 65536 \
  --mc-draws 1000000 --mc-seed 136
python code/make_figure1.py --out-dir figures
python code/audit_theory_and_misspecification.py \
  --out-dir results --false-name-trials 100000 \
  --seed 20260715
python code/validate_reported_results.py --scope all
\end{verbatim}
The core run recomputes all reported values, figures, and the finite
full-commitment check; unreported diagnostic grids remain in the repository.
The tested environment used Python 3.12.10, NumPy 2.1.3, SciPy 1.18.0,
Matplotlib 3.11.0, and SciPy's HiGHS backend with primal and dual feasibility
tolerances of $10^{-9}$. The reference run used a 64-bit Windows 11 build 26200 system,
an AMD Ryzen AI 9 HX 370 CPU, and 32 GB of memory; no GPU, external data, or
network access is required after dependency installation. The Monte Carlo
check uses NumPy seed 136 and one million draws, and the finite false-name
audit uses Python seed 20260715 and one hundred thousand trials. The saved
Monte Carlo output reports standard errors and normal confidence intervals;
the sensitivity output reports all 48 cells and their minimum, median, and
maximum.

\end{document}